\numberwithin{equation}{section}
\newtheorem{lemma}{\textbf{Lemma}}[section]
\newtheorem{theorem}{\textbf{Theorem}}
\newtheorem{remark}{\textbf{Remark}}[section]
\newtheorem{corollary}{\textbf{Corollary}}[section]
\newtheorem{example}{\textbf{Example}}[section]
\begin{document}
\baselineskip 17pt\title{\Large\bf MacWilliams Type Identities for Linear Block Codes on Certain Pomsets}

\author{\large  Wen Ma \quad\quad Jinquan Luo\footnote{The authors are with School of Mathematics
and Statistics \& Hubei Key Laboratory of Mathematical Sciences, Central China Normal University, Wuhan China.\newline
 E-mails: mawen95@126.com(W.Ma),  luojinquan@mail.ccnu.edu.cn(J.Luo)}}
\date{}
\maketitle

{\bf Abstract}: Pomset block metric is a generalization of pomset metric. In this paper, we define weight enumerator of linear block codes in pomset metric over $\mathbb{Z}_m$ and establish MacWilliams type identities for linear block codes with respect to certain pomsets. The relation between weight enumerators of two linear pomset block codes and their direct sum is also investigated.

{\bf Key words}: pomset, block codes, dual, weight enumerator, MacWilliams identity.

\section{Introduction}

\quad\;The study of codes endowed with a metric other than the Hamming metric gained momentum since 1990's with the introduction of poset metric by Brialdi et al. (see [\ref{POSET}]). Feng et al. in [\ref{FENG}] introduced block metric and studied MDS block codes. Poset block metric was introduced by Alves et al. in [\ref{Alves}] unifying poset metric and block metric.

More recently in [\ref{POMSET}], Selvaraj and Sudha generalized the poset metric structure to a pomset metric structure. They introduced pomset metric and initialized the study of codes equipped with pomset metric. The concept of order ideals of a pomset is proposed and pomset metric is defined. We also refer the reader to [\ref{support}, \ref{PANEK}] for two general metrics in poset space.

Weight distribution of a code describes the number of codewords of each possible weight.  The MacWilliams identity for linear codes over finite fields is one of the most important identities in coding theory (see [\ref{ERROR}]). It expresses a connection between the weight enumerator of a linear codes and its dual code.

Kim and Oh classified all poset structures that admit the MacWilliams identity, and derived the MacWilliams identities for poset weight enumerators corresponding to such posets (see [\ref{POSETMAC}]). It is proved that being hierarchical is a necessary and sufficient condition for a poset to admit the MacWilliams identity. They also derived an explicit relation between the $P$-weight distribution of a hierarchical poset code and the $\bar{P}$-weight distribution of the dual code. Extending their observations, Pinheiro and Firer proved that a poset-block space admits a MacWilliams-type identity if and only if the poset is hierarchical, and at any level of the poset, all the blocks have the same dimension (see [\ref{posetmac}]). They explicitly stated the relation between the weight enumerators of a code and its dual when the poset-block admits the MacWilliams-type identity.

For codes over $\mathbb{Z}_m$, pomset metric is a generalization of Lee metric when pomset is taken to be an antichain; in some sense, it is a generalization to poset  metric as well. Sudha and Selvaraj defined  pomset weight enumerator of a code $\mathcal {C}$ and established MacWilliams type identities for linear codes with respect to certain pomsets (see [\ref{POMSETMAC}]). The identities for a particular type of linear codes are established by considering direct and ordinal sum of pomsets on them. For pomset block codes, it is natural to attempt to obtain some kind of MacWilliams identities for block codes in certain pomset metric. In this paper, we generalize the MacWilliams type identity given in [\ref{POMSETMAC}] for pomset spaces to pomset-block spaces.

The paper is organized as follows, Section 2 contains basic notions of pomset block metric over $\mathbb{Z}_m^n$ and defines weight enumerator for a linear pomset block code. In Section 3, we consider the relationship between the weight distribution of a pomset block code and its dual when the pomset is a chain pomset. In particular, an explicit relation is derived between the $\mathbb{P}$-weight distribution of a pomset block code and the $\widetilde{\mathbb{P}}$-weight distribution of the dual code when $\mathbb{Z}_m$ is a field and all blocks have dimension 2. We also give same examples to illustrate our conclusion. In Section 4, we give MacWilliams type identities on direct and ordinal sums of general pomsets. Lastly, we summarize our results and raise a question for further research.

\section{Preliminaries}

\quad\; In this section, we introduce some basic notations and useful results of a pomset block metric.

A collection of elements which may contain duplicates is called a \textbf{multiset} (in short, \textbf{mset}). Girish and John defined a multiset relation and explored some of basic properties (see [\ref{GIRISH}] and [\ref{girish}]).

Let $X$ be a set formally. A mset $M$ drawn from the set $X$ is represented by a function count $C_{M}:X\rightarrow\mathbb{N}$ where $\mathbb{N}$ represents the set of non-negative integers. For each $a\in X$, $C_{M}(a)$ indicates the number of occurrences of the element $a$ in $M$.

An element $a\in X$ appearing $p$ times in $M$ is denoted by $p/a\in M$ and thus $C_M(a)=p$. If we consider $k/a\in M$, the value of $k$ satisfies $k\leq p$. The mset drawn from the set $X=\{a_1,a_2,\ldots,a_n\}$ is represented as $M=\{p_1/a_1,p_2/a_2,\ldots,p_n/a_n\}$. The \textbf{cardinality} of an mset $M$ drawn from $X$ is $|M|=\sum_{a\in X}C_M(a)$. The \textbf{root set} of $M$ denoted by $M^{*}$ is defined as $M^{*}=\{a\in X:C_M(a)>0\}$.

Let $M_1$ and $M_2$ be two msets drawn from a set $X$. We call $M_1$ a \textbf{submset} of $M_2$ ($M_1\subseteq M_2$) if $C_{M_1}(a)\leq C_{M_2}(a)$ for all $a\in X$. The \textbf{union} of $M_1$ and $M_2$ is an mset denoted by $M=M_1\cup M_2$ such that for all $a\in X$, $C_M(a)=\text{max}\{C_{M_1}(a),C_{M_2}(a)\}$.

Let $M_1$ and $M_2$ be two msets drawn from $X$, the \textbf{Cartesian product} of $M_1$ and $M_2$ is also an mset defined as
$$M_1\times M_2=\{pq/(p/a,q/b):p/a\in M_1,q/b\in M_2\}.$$
Denote by $C_1(a,b)$ the count of the first coordinate in the ordered pair $(a,b)$ and by $C_2(a,b)$ the count of the second coordinate in the ordered pair $(a,b)$.

A submset $R$ of $M\times M$ is said to be an \textbf{mset relation} on $M$ if every member $(p/a,q/b)$ of $R$ has count $C_1(a,b)\cdot C_2(a,b)$. An mset relation $R$ on an mset $M$ is said to be \textbf{reflexive} if $m/a\ R\ m/a$ for all $m/a\in M$; \textbf{antisymmetric} if $m/a\ R\ n/b$ and $n/b\ R\ m/a$ imply $m=n$ and $a=b$; \textbf{transitive} if $m/a\ R\ n/b$ and $n/b\ R\ k/c$ imply $m/a\ R\ k/c$. An mset relation $R$ is called a \textbf{partially ordered mset relation (or order relation)} if it is reflexive, antisymmetric and transitive. The pair $(M,R)$ is known as a \textbf{partially ordered multiset (pomset)} denoted by $\mathbb{P}$.

Let $\mathbb{P}=(M,R)$ and $m/a\in M$. Then $m/a$ is a \textbf{maximal element} of $\mathbb{P}$ if there exists no $n/b\in M\ (b\neq a)$ such that $m/a\ R\ n/b$; $m/a$ is a \textbf{minimal element} if there exists no $n/b\in M\ (b\neq a)$ such that $n/b\ R\ m/a$. $\mathbb{P}$ is called a \textbf{chain} if every distinct pair of points from $M$ is comparable in $\mathbb{P}$. $\mathbb{P}$ is called an \textbf{anti-chain} if every distinct pair of points from $M$ is incomparable in $\mathbb{P}$.

A submset $I$ of $M$ is called an \textbf{order ideal} (or simply an \textbf{ideal}) of $\mathbb{P}$ if $k/a\in I$ and $q/b\ R\ k/a$ $(b\neq a)$ imply $q/b\in I$.  An \textbf{ideal generated by an element $k/a\in M$} is defined as
$$\langle k/a\rangle=\{k/a\}\cup\{q/b\in M: q/b\ R\ k/a\  \text{and}\ b\neq a\}. $$

An \textbf{ideal generated by a submset} $S$ of $M$ is defined by $\langle S\rangle=\bigcup\limits_{k/a\in S}\langle k/a\rangle$.

For a given pomset $\mathbb{P}=(M,R)$, the \textbf{dual pomset} $\widetilde{\mathbb{P}}=(M,\widetilde{R})$ of $\mathbb{P}$ is given by:
\begin{center}
$\mathbb{P}$ and $\widetilde{\mathbb{P}}$ have the same underlying set $M$ and $p/a\ R\ q/b$ in $\mathbb{P}$ if and only if $q/b\ \widetilde{R}\ p/a$ in $\widetilde{\mathbb{P}}$.
\end{center}
Note that $\mathbb{P}$ is a chain pomset implies that $\widetilde{\mathbb{P}}$ is a chain pomset.

Consider $\mathbb{Z}_m=\{0,1,\ldots,m-1\}$, the ring of integers modulo $m$. We consider a pomset $\mathbb{P}$ defined on an mset $M=\left\{\left\lfloor\frac{m}{2}\right\rfloor/1,\left\lfloor\frac{m}{2}\right\rfloor/2, \ldots,\left\lfloor\frac{m}{2}\right\rfloor/s\right\}$.

Let $\pi: [s]\rightarrow \mathbb{N}$ be a map such that $n=\sum\limits_{i=1}\limits^{s}\pi(i)$. The map $\pi$ is said to be a \textbf{labeling} of the pomset $\mathbb{P}$, and the pair $(\mathbb{P}, \pi)$ is called a \textbf{pomset block structure} over $[s]$. Denote $\pi(i)$ by $k_i$ and take $V_i$ as free $\mathbb{Z}_m$-module $\mathbb{Z}_m^{k_i}$ for all $1\leq i\leq s$. Define $V$ as
$$V=V_1\oplus V_2\oplus\cdots\oplus V_s$$
which is isomorphic to $\mathbb{Z}_m^{n}$. Each $\boldsymbol{u}\in V$ can be written as $$\boldsymbol{u}=(\boldsymbol{u_1},\boldsymbol{u_2},\ldots,\boldsymbol{u_s})$$
 where $\boldsymbol{u_i}=(u_{i1},u_{i2},\ldots,u_{ik_i})\in \mathbb{Z}_m^{k_i}$, $1\leq i\leq s$. For $a\in\mathbb{Z}_m$, \textbf{Lee weight} $w_L(a)$ of $a$ is minimum of $a$ and $m-a$.
The \textbf{Lee block support} of $\boldsymbol{u}\in V$ is defined as
$$supp_{(L,\pi)}(\boldsymbol{u})=\left\{s_i/i:s_i=w_{(L,\pi)}(\boldsymbol{u_i}),s_i\neq 0\right\},$$
where
$$w_{(L,\pi)}(\boldsymbol{u_i})=\text{max}\left\{w_L(\boldsymbol{u_{i_t}}):1\leq t\leq\pi(i)\right\}.$$

\noindent The \textbf{$(\mathbb{P},\pi)$-weight} of $\boldsymbol{u}\in V$ is defined to be the cardinality of the ideal generated by $supp_{(L,\pi)}(\boldsymbol{u})$, that is
$$w_{(\mathbb{P},\pi)}(\boldsymbol{u})=\left|\langle supp_{(L,\pi)}(\boldsymbol{u})\rangle\right|.$$

\noindent The \textbf{pomset block distance} between two vectors $\boldsymbol{u},\boldsymbol{v}\in V$ is given by
$$d_{(\mathbb{P},\pi)}(\boldsymbol{u},\boldsymbol{v})= w_{(\mathbb{P},\pi)}(\boldsymbol{u}-\boldsymbol{v})$$
which induces a metric on $\mathbb{Z}_m^n$ known as \textbf{pomset block metric}. The pair $\left(V, d_{(\mathbb{P},\pi)}\right)$ is said to be a \textbf{pomset block space}. A subset $\mathcal {C}$ of $\left(V,d_{(\mathbb{P},\pi)}\right)$ with cardinality $K$ is called an $(n,K,d)$ \textbf{$(\mathbb{P},\pi)$-code}, where $V$ is equipped with the pomset block metric $d_{(\mathbb{P},\pi)}(.,.)$ and
$$d=d_{(\mathbb{P},\pi)}(\mathcal {C})=\text{min}\left\{d_{(\mathbb{P},\pi)}(\boldsymbol{u},\boldsymbol{v}):\boldsymbol{u}\neq \boldsymbol{v}\in \mathcal {C}\right\}$$
is the \textbf{$(\mathbb{P},\pi)$-minimum distance} of $\mathcal {C}$. If $\mathcal {C}$ is a submodule of $V$ with cardinality $m^k$,  we call $\mathcal {C}$ a \textbf{linear $(n,m^k,d)$ $(\mathbb{P},\pi)$-code}.
The \textbf{dual} of an $(n,K,d)$ $(\mathbb{P},\pi)$-code $\mathcal {C}$ is defined as
$$\mathcal {C}^{\bot}=\left\{\boldsymbol{v}\in V:\boldsymbol{c} ¡¢
\cdot\boldsymbol{v}=c_1v_1+\cdots+c_nv_n=0\ \text{for all}\ \boldsymbol{c}\in \mathcal {C}\right\}.$$

For a linear $(\mathbb{P},\pi)$-code $\mathcal {C}$, the \textbf{$(\mathbb{P},\pi)$-weight enumerator} for $\mathcal {C}$ is the polynomial
$$W_{(\mathcal {C},\pi)}\left(x,y;\mathbb{P}\right)=\sum\limits_{\boldsymbol{u}\in\mathcal {C}}x^{s\lfloor\frac{m}{2}\rfloor-w_{(\mathbb{P},\pi)}(\boldsymbol{u})} y^{w_{(\mathbb{P},\pi)}(\boldsymbol{u})}=\sum \limits_{i=0}^{s\lfloor\frac{m}{2}\rfloor}A_{i,(\mathbb{P},\pi)}(\mathcal {C})x^{s\lfloor\frac{m}{2}\rfloor-i}y^i,$$
where $A_{i,(\mathbb{P},\pi)}(\mathcal {C})=\left|\{\boldsymbol{u}\in\mathcal {C}:w_{(\mathbb{P},\pi)}(\boldsymbol{u})=i\}\right|$.

\section{MacWilliams type identity in $(\mathbb{P},\pi)$ spaces for chain pomset}

\quad\;Let $[a]=\{1,2,\ldots,a\}$ and $[a,b]$ be the set of all integers between $a$ and $b$.  In this section, we will derive the MacWilliams type identity for linear block codes in the chain pomset metric. Without loss of generality, we define the pomset $\mathbb{P}=(M,R)$ on the multiset $M=\{\lfloor\frac{m}{2}\rfloor/1,\ldots,\lfloor\frac{m}{2}\rfloor/s\}$ whose mset relation is given by
$$\left\lfloor\frac{m}{2}\right\rfloor/i\ R\ \left\lfloor\frac{m}{2}\right\rfloor/j\Leftrightarrow i\leq j.$$

Let $(\mathbb{P},\pi)$ be a pomset block structure on $V$ where $\pi$ is a labeling of $\mathbb{P}$ such that $\sum\limits_{i=1}^s\pi(i)=n$. Suppose that $\boldsymbol{u}=(\boldsymbol{u_1},\ldots,\boldsymbol{u_s})\in V$ such that $\boldsymbol{u_i}\neq\textbf{0}\in\mathbb{Z}_m^{\pi(i)}$ and $(\boldsymbol{u_{i+1}},\boldsymbol{u_{i+2}},\ldots,\boldsymbol{u_s})= \textbf{0}\in\mathbb{Z}_m^{\pi(i+1)+\cdots+\pi(s)}$.
By the definition of $\mathbb{P}$, we have that $w_{(\mathbb{P},\pi)}(\boldsymbol{u})= \left\lfloor\frac{m}{2}\right\rfloor(i-1)+w_{(L,\pi)}(\boldsymbol{u_i})$. Given a linear $(\mathbb{P},\pi)$-code $\mathcal {C}$, we define
$$\mathcal {C}_i=\left\{\boldsymbol{u}\in\mathcal {C}:(\boldsymbol{u_i},\boldsymbol{u_{i+1}},\cdots,\boldsymbol{u_s})= \textbf{0}\in\mathbb{Z}_m^{\pi(i)+\pi(i+1)+\cdots+\pi(s)}\right\}$$
and
$$\mathcal {C}_i^{'}=\left\{\boldsymbol{u}\in\mathcal {C}:\textbf{0}\neq \boldsymbol{u_i}\in\mathbb{Z}_m^{\pi(i)},\ (\boldsymbol{u_{i+1}},\cdots,\boldsymbol{u_s})= \textbf{0}\in\mathbb{Z}_m^{\pi(i+1)+\cdots\pi(s)}\right\}.$$

Let $\mathcal {R}$ be a finite ring. Recall that an \textbf{additive character} $\chi$ on $\mathcal {R}$ is just a group homomorphism from the additive group $\mathcal {R}$ into the multiplicative group $\mathbb{C}^{*}$. The set of all additive character of $\mathcal {R}$ forms a group $\hat{\mathcal {R}}$, called the \textbf{character group} whose group operation is the pointwise multiplication of characters. Moreover, $\hat{\mathcal {R}}$ is a right $\mathcal {R}$-module with the function $\mathcal {R}\times\hat{\mathcal {R}}\rightarrow\hat{\mathcal {R}}$ given by $(a,\chi)=\chi_a$ where $\chi_a\in\hat{\mathcal {R}}$ such that $\chi_a(b)=\chi(ab)$ for all $\chi\in\hat{\mathcal {R}}$ and $a,b\in \mathcal {R}$. A character $\chi$ of $\mathcal {R}$ is a \textbf{right generating character} if the mapping $\phi:\mathcal {R}\rightarrow\hat{\mathcal {R}}$ given by $\phi(r)=\chi_r$ is an isomorphism of $\mathcal {R}$-modules. See [\ref{CHARACTER}] for detailed discussion on additive characters.

\begin{lemma}([\ref{CHARACTER}])\label{character}
Let $\chi$ be a character of a finite ring $\mathcal {R}$. Then $\chi$ is a right generating character if and only if ker $\chi$ contains no non-zero right ideals.
\end{lemma}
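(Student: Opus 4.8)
The plan is to study the map of right $\mathcal{R}$-modules $\phi\colon\mathcal{R}\to\hat{\mathcal{R}}$, $\phi(r)=\chi_r$, and to translate the property of being a right generating character into a condition on $\ker\chi$.

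First I would verify that $\phi$ is a homomorphism of right $\mathcal{R}$-modules, which is a routine check: $\chi_{r+s}(b)=\chi((r+s)b)=\chi(rb)\chi(sb)$ shows $\phi(r+s)=\phi(r)\phi(s)$, and $\chi_{ra}(b)=\chi(rab)=\chi_r(ab)=(\chi_r a)(b)$ shows $\phi(ra)=\phi(r)a$. Consequently $\phi$ is an isomorphism of $\mathcal{R}$-modules if and only if it is a bijection of the underlying sets. The key structural fact I would then invoke is that the additive group of $\mathcal{R}$ and its character group $\hat{\mathcal{R}}$ have the same cardinality; this follows from $\widehat{A\oplus B}\cong\hat A\oplus\hat B$, $\widehat{\mathbb{Z}/n\mathbb{Z}}\cong\mathbb{Z}/n\mathbb{Z}$ and the structure theorem for finite abelian groups. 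Hence $\phi$ is an isomorphism precisely when $\phi$ is injective, i.e. precisely when $\ker\phi=\{0\}$.

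The remaining step is to identify $\ker\phi$. By definition $r\in\ker\phi$ means that $\chi_r$ is the trivial character, i.e. $\chi(rb)=1$ for all $b\in\mathcal{R}$, equivalently $r\mathcal{R}\subseteq\ker\chi$. I would show that $\ker\phi$ is exactly the largest right ideal of $\mathcal{R}$ contained in $\ker\chi$: it is an additive subgroup closed under right multiplication, since $r\mathcal{R}\subseteq\ker\chi$ forces $(ra)\mathcal{R}\subseteq r\mathcal{R}\subseteq\ker\chi$; it is contained in $\ker\chi$, since $\chi(r)=\chi(r\cdot 1)=1$, using the identity of $\mathcal{R}$; and any right ideal $J\subseteq\ker\chi$ satisfies $r\mathcal{R}\subseteq J\subseteq\ker\chi$ for each $r\in J$, hence $J\subseteq\ker\phi$. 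Putting these together, $\chi$ is a right generating character $\Longleftrightarrow$ $\phi$ is an isomorphism $\Longleftrightarrow$ $\ker\phi=\{0\}$ $\Longleftrightarrow$ $\ker\chi$ contains no nonzero right ideal.

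I expect the only genuine obstacle to be the cardinality identity $|\hat{\mathcal{R}}|=|\mathcal{R}|$, which is what lets injectivity of $\phi$ upgrade to bijectivity; the rest is formal manipulation of characters and ideals. Since the lemma is quoted from [\ref{CHARACTER}], in the paper one may simply cite it, but the sketch above is the self-contained argument I would supply.
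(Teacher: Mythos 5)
Your proposal is correct; the paper gives no proof of its own (the lemma is simply quoted from Wood's duality paper [\ref{CHARACTER}]), and your argument --- checking that $\phi(r)=\chi_r$ is right $\mathcal{R}$-linear, using $|\hat{\mathcal{R}}|=|\mathcal{R}|$ to reduce bijectivity to injectivity, and identifying $\ker\phi$ as the largest right ideal contained in $\ker\chi$ (where $r=r\cdot 1\in\ker\chi$ uses that $\mathcal{R}$ is unital, as is assumed here) --- is exactly the standard argument found in that reference. Nothing is missing.
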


The following lemmas are easy consequences of Lemma \ref{character}.

\begin{lemma}\label{ring}
Let $\chi$ be a nontrivial additive character on a finite communicative ring $\mathcal {R}$ and $\boldsymbol{a}$ be a fixed element of $\mathcal {R}$. Then
$$\sum\limits_{\boldsymbol{b}\in \mathcal {R}}\chi(\boldsymbol{a}\cdot\boldsymbol{b})=\left\{
                             \begin{array}{ll}
                               |\mathcal {R}|, & \text{if}\ \boldsymbol{a}=\boldsymbol{0};\\[3mm]
                               0,   & \text{if}\ \boldsymbol{a}\neq \boldsymbol{0}.
                             \end{array}
                           \right.$$
\end{lemma}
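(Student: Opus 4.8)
The plan is to reduce everything to the orthogonality relation for characters of a finite abelian group. The case $\boldsymbol{a}=\boldsymbol{0}$ is trivial: then $\chi(\boldsymbol{a}\cdot\boldsymbol{b})=\chi(\boldsymbol{0})=1$ for every $\boldsymbol{b}\in\mathcal{R}$, so the sum is $\sum_{\boldsymbol{b}}1=|\mathcal{R}|$. All the content is in the case $\boldsymbol{a}\neq\boldsymbol{0}$, where one must show that the sum vanishes.

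For $\boldsymbol{a}\neq\boldsymbol{0}$, I would set $\psi\colon\mathcal{R}\to\mathbb{C}^{*}$, $\psi(\boldsymbol{b})=\chi(\boldsymbol{a}\cdot\boldsymbol{b})$; this is exactly the character $\chi_{\boldsymbol{a}}$ from the discussion preceding Lemma \ref{character}, and it is an additive character because $\chi$ is additive and multiplication by $\boldsymbol{a}$ is additive. The crucial step is that $\psi$ is nontrivial: if $\psi\equiv 1$ then $\chi$ kills the ideal $\boldsymbol{a}\mathcal{R}$, which is nonzero since $\boldsymbol{a}=\boldsymbol{a}\cdot 1\in\boldsymbol{a}\mathcal{R}$; but by Lemma \ref{character} the generating character $\chi$ has $\ker\chi$ containing no nonzero (right) ideal, a contradiction. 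Once $\psi$ is known to be a nontrivial character of the finite group $(\mathcal{R},+)$, choose $\boldsymbol{b}_{0}$ with $\psi(\boldsymbol{b}_{0})\neq 1$; since $\boldsymbol{b}\mapsto\boldsymbol{b}_{0}+\boldsymbol{b}$ permutes $\mathcal{R}$ we get $\sum_{\boldsymbol{b}}\psi(\boldsymbol{b})=\sum_{\boldsymbol{b}}\psi(\boldsymbol{b}_{0}+\boldsymbol{b})=\psi(\boldsymbol{b}_{0})\sum_{\boldsymbol{b}}\psi(\boldsymbol{b})$, so $\bigl(\psi(\boldsymbol{b}_{0})-1\bigr)\sum_{\boldsymbol{b}}\psi(\boldsymbol{b})=0$ and the sum is $0$.

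The one point needing care — the main obstacle — is the passage from "nontrivial" to the orthogonality property: a merely nontrivial character need not satisfy the conclusion (for instance $x\mapsto(-1)^{x}$ on $\mathbb{Z}_{4}$ has nonzero sum against $\boldsymbol{a}=2$), so the argument really uses that $\chi$ is a \emph{generating} character, via Lemma \ref{character}. Since in this paper $\mathcal{R}$ is $\mathbb{Z}_{m}$ (or a block $\mathbb{Z}_{m}^{k_{i}}$), which is a Frobenius ring, a generating character exists — concretely $\chi(x)=e^{2\pi i x/m}$ on $\mathbb{Z}_{m}$, extended coordinatewise — and I would state and apply the lemma for that fixed $\chi$. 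With this choice in place, the remaining steps are just the routine two-line orthogonality computation above.
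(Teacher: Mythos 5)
Your proof is correct and follows exactly the route the paper intends: the paper gives no explicit argument, stating only that this lemma is an "easy consequence of Lemma \ref{character}", and your derivation (reduce to the character $\chi_{\boldsymbol{a}}$, use Lemma \ref{character} to see it is nontrivial for $\boldsymbol{a}\neq\boldsymbol{0}$, then apply the standard shift/orthogonality computation) is precisely that consequence spelled out. Your side remark is also well taken: as literally stated with only "nontrivial additive character" the lemma is false (e.g.\ $\chi(x)=(-1)^{x}$ on $\mathbb{Z}_{4}$ with $\boldsymbol{a}=2$ gives sum $4$), so the hypothesis must be read as $\chi$ being a generating character — in this paper the canonical character $x\mapsto e^{2\pi i x/m}$ on $\mathbb{Z}_{m}$ — which is exactly how it is used in the proof of Theorem \ref{MAC}.
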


\begin{lemma}
Let $\chi$ be a generating character on a finite communicative ring $\mathcal {R}$. For any submodule $\mathcal {C}\subseteq\mathcal {R}^n$, we have
$$\sum\limits_{\boldsymbol{u}\in \mathcal {R}}\chi(\boldsymbol{u}\cdot\boldsymbol{v})=\left\{
                             \begin{array}{ll}
                               |\mathcal {C}|, & \text{if}\ \boldsymbol{v}\in\mathcal {C}^{\bot};\\[3mm]
                               0,  & \text{if}\ \boldsymbol{v}\notin\mathcal {C}^{\bot}.
                             \end{array}
                           \right.$$
\end{lemma}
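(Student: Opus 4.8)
The plan is to read the left-hand side as a character sum over the finite abelian group $(\mathcal{C},+)$ and then invoke orthogonality of characters, using the generating-character hypothesis only to pin down when the relevant character is trivial. (I read the summation index as $\boldsymbol{u}\in\mathcal{C}$; printed as $\mathcal{R}$ it appears to be a typo.)

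First I would fix $\boldsymbol{v}\in\mathcal{R}^n$ and set $\psi_{\boldsymbol{v}}\colon\mathcal{C}\to\mathbb{C}^{*}$ by $\psi_{\boldsymbol{v}}(\boldsymbol{u})=\chi(\boldsymbol{u}\cdot\boldsymbol{v})$. Since $\chi$ is additive and the dot product is additive in each variable, $\psi_{\boldsymbol{v}}$ is an additive character of $\mathcal{C}$. If $\boldsymbol{v}\in\mathcal{C}^{\bot}$, then $\boldsymbol{u}\cdot\boldsymbol{v}=0$ for all $\boldsymbol{u}\in\mathcal{C}$, so $\psi_{\boldsymbol{v}}\equiv 1$ and the sum equals $|\mathcal{C}|$; this settles the first case.

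For $\boldsymbol{v}\notin\mathcal{C}^{\bot}$ I must show $\psi_{\boldsymbol{v}}$ is nontrivial. The crucial point is that $I:=\{\boldsymbol{u}\cdot\boldsymbol{v}:\boldsymbol{u}\in\mathcal{C}\}$ is an \emph{ideal} of $\mathcal{R}$, not merely an additive subgroup: it is closed under addition by biadditivity, and because $\mathcal{C}$ is an $\mathcal{R}$-submodule of $\mathcal{R}^n$ we have $r\boldsymbol{u}\in\mathcal{C}$ with $(r\boldsymbol{u})\cdot\boldsymbol{v}=r(\boldsymbol{u}\cdot\boldsymbol{v})$ for every $r\in\mathcal{R}$. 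Since $\boldsymbol{v}\notin\mathcal{C}^{\bot}$ we have $I\neq\{0\}$; and since $\chi$ is a generating character, Lemma \ref{character} forbids $\ker\chi$ from containing any nonzero ideal, so $I\not\subseteq\ker\chi$. Hence there is $\boldsymbol{u}_0\in\mathcal{C}$ with $\chi(\boldsymbol{u}_0\cdot\boldsymbol{v})\neq 1$, i.e.\ $\psi_{\boldsymbol{v}}(\boldsymbol{u}_0)\neq 1$. I expect this to be the only non-formal step: recognizing that the image of the pairing is an ideal is exactly what lets the defining property of a generating character be brought to bear, and it is the place where mere additive-subgroup reasoning would fail.

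Finally I would run the standard orthogonality computation: with $S=\sum_{\boldsymbol{u}\in\mathcal{C}}\psi_{\boldsymbol{v}}(\boldsymbol{u})$, replacing the summation variable $\boldsymbol{u}$ by $\boldsymbol{u}+\boldsymbol{u}_0$ gives $\psi_{\boldsymbol{v}}(\boldsymbol{u}_0)\,S=S$, so $(\psi_{\boldsymbol{v}}(\boldsymbol{u}_0)-1)S=0$ and therefore $S=0$. Combining the two cases yields the stated formula. Essentially the same conclusion could be packaged by applying Lemma \ref{ring} on the subgroup $\mathcal{C}$, but the ideal observation above is still what guarantees nontriviality in the second case.
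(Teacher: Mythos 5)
Your proof is correct, and it follows exactly the route the paper intends: the paper offers no written proof beyond declaring this an easy consequence of Lemma \ref{character}, and your argument (note that the pairing image $\{\boldsymbol{u}\cdot\boldsymbol{v}:\boldsymbol{u}\in\mathcal{C}\}$ is an ideal, invoke Lemma \ref{character} to get nontriviality of $\psi_{\boldsymbol{v}}$ when $\boldsymbol{v}\notin\mathcal{C}^{\bot}$, then the standard translation/orthogonality trick) is precisely that consequence spelled out. You also correctly read the summation index $\boldsymbol{u}\in\mathcal{R}$ as a typo for $\boldsymbol{u}\in\mathcal{C}$, which is the reading needed for Lemma \ref{dual}.
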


Let $f$ be a complex-valued function defined on $\mathcal {R}^n$ and $\chi$ be a generating character of $\mathcal {R}$. The Fourier transform of $f$ is
$$\hat{f}(\boldsymbol{u})=\sum\limits_{\boldsymbol{v}\in\mathcal {R}^n}\chi(\boldsymbol{u}\cdot\boldsymbol{v})f(\boldsymbol{v}).$$

\begin{lemma}\label{dual}
Let $\mathcal {C}\subseteq\mathcal {R}^n$ be a submodule and $f$ be a function defined on $\mathcal {R}^n$. Then
$$\sum\limits_{\boldsymbol{v}\in\mathcal {C}^{\bot}}f(\boldsymbol{v})=\frac{1}{|\mathcal {C}|}\sum\limits_{\boldsymbol{u}\in\mathcal {C}}\hat{f}(\boldsymbol{u}).$$
\end{lemma}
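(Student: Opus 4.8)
The plan is to prove the Poisson summation / Fourier inversion identity
$$\sum_{\boldsymbol{v}\in\mathcal {C}^{\bot}}f(\boldsymbol{v})=\frac{1}{|\mathcal {C}|}\sum_{\boldsymbol{u}\in\mathcal {C}}\hat{f}(\boldsymbol{u})$$
by expanding the Fourier transform on the right-hand side according to its definition and interchanging the order of summation. First I would write
$$\sum_{\boldsymbol{u}\in\mathcal {C}}\hat{f}(\boldsymbol{u})=\sum_{\boldsymbol{u}\in\mathcal {C}}\sum_{\boldsymbol{v}\in\mathcal {R}^n}\chi(\boldsymbol{u}\cdot\boldsymbol{v})f(\boldsymbol{v}).$$
Both index sets are finite, so Fubini is automatic and I may swap the two sums, pulling $f(\boldsymbol{v})$ out of the inner sum:
$$\sum_{\boldsymbol{u}\in\mathcal {C}}\hat{f}(\boldsymbol{u})=\sum_{\boldsymbol{v}\in\mathcal {R}^n}f(\boldsymbol{v})\sum_{\boldsymbol{u}\in\mathcal {C}}\chi(\boldsymbol{u}\cdot\boldsymbol{v}).$$

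The key step is to evaluate the inner character sum $\sum_{\boldsymbol{u}\in\mathcal {C}}\chi(\boldsymbol{u}\cdot\boldsymbol{v})$, and here I would invoke the lemma stated immediately before this one (the generating-character orthogonality relation for a submodule): for a generating character $\chi$ on $\mathcal {R}$ and any submodule $\mathcal {C}\subseteq\mathcal {R}^n$, this sum equals $|\mathcal {C}|$ when $\boldsymbol{v}\in\mathcal {C}^{\bot}$ and $0$ otherwise. Substituting this in, every term with $\boldsymbol{v}\notin\mathcal {C}^{\bot}$ vanishes, and each surviving term contributes $|\mathcal {C}|f(\boldsymbol{v})$, so
$$\sum_{\boldsymbol{u}\in\mathcal {C}}\hat{f}(\boldsymbol{u})=|\mathcal {C}|\sum_{\boldsymbol{v}\in\mathcal {C}^{\bot}}f(\boldsymbol{v}).$$
Dividing both sides by $|\mathcal {C}|$ gives the claim.

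There is essentially no obstacle here: the only subtlety is making sure the previously stated orthogonality lemma is applied with the roles of $\mathcal{C}$ and the variable set up correctly (it is stated with the sum over $\boldsymbol{u}\in\mathcal{R}$ but is clearly intended — and needed — as a sum over $\boldsymbol{u}$ ranging over the submodule $\mathcal{C}$, testing membership of $\boldsymbol{v}$ in $\mathcal{C}^{\bot}$), and that $\chi$ being a generating character is exactly what guarantees the sum detects $\mathcal{C}^{\bot}$ rather than some larger set. Provided that lemma is taken as given, the proof is a two-line computation: expand, swap sums, apply orthogonality, divide by $|\mathcal{C}|$.
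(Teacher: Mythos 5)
Your proof is correct and is exactly the standard argument the paper relies on: the paper states this lemma without proof, treating it as an immediate consequence of the preceding orthogonality lemma, and your expansion--swap--orthogonality computation is precisely that intended derivation. You are also right that the preceding lemma's summation range ``$\boldsymbol{u}\in\mathcal{R}$'' is a typo for $\boldsymbol{u}\in\mathcal{C}$, and applying it in that corrected form is the right move.
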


\begin{theorem}\label{MAC}
Given a linear $(\mathbb{P},\pi)$-code $\mathcal {C}$ of length $n$ over $\mathbb{Z}_m$ on the chain pomset $\mathbb{P}=(M,R)$, we have the following
\begin{enumerate}[(1)]
\item if $m$ is odd then
  \begin{equation*}\label{equ1} W_{(\mathcal{C}^{\bot},\pi)}(x,y;\widetilde{\mathbb{P}})=x^{s\lfloor\frac{m}{2}\rfloor}+ \sum\limits_{i=1}^s \frac{2m^{\pi(i+1)+\cdots+\pi(s)}}{|\mathcal {C}|}\sum \limits_{j=1}^{\lfloor\frac{m}{2}\rfloor} \left(\frac{y}{x}\right)^{(s-i)\lfloor\frac{m}{2}\rfloor+j}\left[\beta_{ij} W_{(\mathcal {C}_i,\pi)}(x,x;\mathbb{P})+LW_{\mathcal {C}_i^{'},\pi}^{j}x^{s\lfloor\frac{m}{2}\rfloor}\right];
  \end{equation*}

\item if $m$ is even then
  \begin{equation*}
  \begin{aligned}
  & W_{(\mathcal{C}^{\bot},\pi)}(x,y;\widetilde{\mathbb{P}})=x^{s\frac{m}{2}}+ \sum\limits_{i=1}^s \frac{m^{\pi(i+1)+\cdots+\pi(s)}}{|\mathcal {C}|}\left(\frac{y}{x}\right)^{(s-i)\frac{m}{2}}&\\
  &\left[\left(2\sum\limits_{j=1}^{\frac{m}{2}-1}\beta_{ij}\left(\frac{y}{x}\right)^j+ \gamma_i\left(\frac{y}{x}\right)^{\frac{m}{2}}\right)W_{(\mathcal {C}_i,\pi)}(x,x;\mathbb{P})+\left(2\sum\limits_{j=1}^{\frac{m}{2}-1}\left(\frac{y}{x}\right)^j LW_{\mathcal {C}_i^{'},\pi}^j+\left(\frac{y}{x}\right)^{\frac{m}{2}}LW_{\mathcal {C}_i^{'},\pi}^{\frac{m}{2}}\right)x^{s\frac{m}{2}}\right],&\\
  \end{aligned}
  \end{equation*}
  where $\beta_{ij}=\frac{(2j+1)^{\pi(i)}-(2j-1)^{\pi(i)}}{2}$, $\gamma_i=m^{\pi(i)}-(m-1)^{\pi(i)}$ and
  $$\left\{\begin{array}{ll}
  LW_{\mathcal {C}_i^{'},\pi}^j=\sum\limits_{\boldsymbol{u}\in\mathcal {C}_i^{'}}\sum\limits_{a=1}^{\pi(i)}\cos\frac{2\pi u_{i_a}j}{m}\prod\limits_{b<a}\left(1+2\sum\limits_{t=1}^{j-1}\cos\frac{2\pi u_{i_b}t}{m}\right)\prod\limits_{b>a}\left(1+2\sum\limits_{t=1}^{j}\cos\frac{2\pi u_{i_b}t}{m}\right)&1\leq j\leq\lfloor\frac{m-1}{2}\rfloor;\\[5mm]
  LW_{\mathcal {C}_i^{'},\pi}^{\frac{m}{2}}=\sum\limits_{\boldsymbol{u}\in\mathcal {C}_i^{'}}\sum\limits_{a=1}^{\pi(i)}(-1)^{u_{i_a}} \prod\limits_{b<a}\left(1+2\sum\limits_{t=1}^{\frac{m}{2}-1}\cos\frac{2\pi u_{i_b}t}{m}\right)\prod\limits_{b>a}\left(1+(-1)^{u_{i_b}}+ 2\sum\limits_{t=1}^{\frac{m}{2}-1}\cos\frac{2\pi u_{i_b}t}{m}\right)& j=\frac{m}{2}\in\mathbb{Z}.
  \end{array}
  \right.$$

\end{enumerate}
\end{theorem}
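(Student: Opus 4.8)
The plan is to derive both identities at once from the Fourier-inversion formula of Lemma~\ref{dual} applied to the $(\widetilde{\mathbb{P}},\pi)$-weight monomial, and then to evaluate the resulting character sums block by block. Fix the generating character $\chi$ of $\mathbb{Z}_m$ given by $\chi(a)=\exp(2\pi\sqrt{-1}\,a/m)$, and set $f(\boldsymbol{v})=x^{s\lfloor m/2\rfloor-w_{(\widetilde{\mathbb{P}},\pi)}(\boldsymbol{v})}y^{w_{(\widetilde{\mathbb{P}},\pi)}(\boldsymbol{v})}$ for $\boldsymbol{v}\in V$, so that $W_{(\mathcal{C}^{\bot},\pi)}(x,y;\widetilde{\mathbb{P}})=\sum_{\boldsymbol{v}\in\mathcal{C}^{\bot}}f(\boldsymbol{v})=\frac{1}{|\mathcal{C}|}\sum_{\boldsymbol{u}\in\mathcal{C}}\hat f(\boldsymbol{u})$ by Lemma~\ref{dual}. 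Thus everything reduces to computing $\hat f(\boldsymbol{u})=\sum_{\boldsymbol{v}\in V}\chi(\boldsymbol{u}\cdot\boldsymbol{v})f(\boldsymbol{v})$ and then summing over $\mathcal{C}$.

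To compute $\hat f(\boldsymbol{u})$, I first note that $\widetilde{\mathbb{P}}$ is the reversed chain, so the same analysis that yields $w_{(\mathbb{P},\pi)}(\boldsymbol{u})=\lfloor m/2\rfloor(i-1)+w_{(L,\pi)}(\boldsymbol{u_i})$ for a vector whose last nonzero block is $\boldsymbol{u_i}$ gives $w_{(\widetilde{\mathbb{P}},\pi)}(\boldsymbol{v})=\lfloor m/2\rfloor(s-i)+w_{(L,\pi)}(\boldsymbol{v_i})$ whenever $\boldsymbol{v_i}$ is the \emph{first} nonzero block of $\boldsymbol{v}$ (and $w_{(\widetilde{\mathbb{P}},\pi)}(\boldsymbol{0})=0$). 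Using $\chi(\boldsymbol{u}\cdot\boldsymbol{v})=\prod_{j=1}^{s}\chi(\boldsymbol{u_j}\cdot\boldsymbol{v_j})$ and splitting the sum over $\boldsymbol{v}$ by the index $i$ of its first nonzero block — the blocks $j<i$ being forced to $\boldsymbol{0}$, the blocks $j>i$ being free, and the weight monomial being $x^{s\lfloor m/2\rfloor}(y/x)^{(s-i)\lfloor m/2\rfloor}(y/x)^{w_{(L,\pi)}(\boldsymbol{v_i})}$ — one obtains
\[
\hat f(\boldsymbol{u})=x^{s\lfloor m/2\rfloor}\left[1+\sum_{i=1}^{s}\left(\frac{y}{x}\right)^{(s-i)\lfloor m/2\rfloor}\Sigma_i(\boldsymbol{u_i})\prod_{j>i}\left(\sum_{\boldsymbol{v_j}\in\mathbb{Z}_m^{\pi(j)}}\chi(\boldsymbol{u_j}\cdot\boldsymbol{v_j})\right)\right],
\]
where $\Sigma_i(\boldsymbol{u_i})=\sum_{\boldsymbol{0}\neq\boldsymbol{v_i}\in\mathbb{Z}_m^{\pi(i)}}\chi(\boldsymbol{u_i}\cdot\boldsymbol{v_i})(y/x)^{w_{(L,\pi)}(\boldsymbol{v_i})}$. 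Summing over $\boldsymbol{u}\in\mathcal{C}$ and applying Lemma~\ref{ring} coordinatewise to each factor with $j>i$, the $i$-th term survives only for those $\boldsymbol{u}$ with $\boldsymbol{u_{i+1}}=\cdots=\boldsymbol{u_s}=\boldsymbol{0}$, where the product equals $m^{\pi(i+1)+\cdots+\pi(s)}$; this set of $\boldsymbol{u}$ is precisely the disjoint union $\mathcal{C}_i\sqcup\mathcal{C}_i'$, according to whether $\boldsymbol{u_i}=\boldsymbol{0}$ or $\boldsymbol{u_i}\neq\boldsymbol{0}$.

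It remains to evaluate $\Sigma_i(\boldsymbol{u_i})$ in each case. For $\boldsymbol{u}\in\mathcal{C}_i$ one has $\boldsymbol{u_i}=\boldsymbol{0}$, hence $\Sigma_i(\boldsymbol{0})=\sum_{j\ge1}N_j^{(i)}(y/x)^j$ with $N_j^{(i)}=\#\{\boldsymbol{v_i}\in\mathbb{Z}_m^{\pi(i)}:w_{(L,\pi)}(\boldsymbol{v_i})=j\}$; since $w_{(L,\pi)}(\boldsymbol{v_i})=\max_a w_L(v_{i_a})$ and $\#\{a\in\mathbb{Z}_m:w_L(a)\le j\}$ equals $2j+1$ for $1\le j\le\lfloor(m-1)/2\rfloor$ and $m$ for $j=m/2$ (when $m$ even), a level-set count gives $N_j^{(i)}=(2j+1)^{\pi(i)}-(2j-1)^{\pi(i)}=2\beta_{ij}$ for $1\le j\le\lfloor(m-1)/2\rfloor$ and $N_{m/2}^{(i)}=m^{\pi(i)}-(m-1)^{\pi(i)}=\gamma_i$ when $m$ is even. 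For $\boldsymbol{u}\in\mathcal{C}_i'$ one has $\boldsymbol{u_i}\neq\boldsymbol{0}$; here I expand $(y/x)^{\max_a w_L(v_{i_a})}=\sum_{j}(y/x)^j\big(\prod_a\mathbf{1}[w_L(v_{i_a})\le j]-\prod_a\mathbf{1}[w_L(v_{i_a})\le j-1]\big)$, pass the character sum through each product to get $\prod_a F_j(u_{i_a})-\prod_a F_{j-1}(u_{i_a})$ with $F_j(u)=\sum_{w_L(a)\le j}\chi(ua)$, and then apply the telescoping identity $\prod_a x_a-\prod_a y_a=\sum_a\big(\prod_{b<a}y_b\big)(x_a-y_a)\big(\prod_{b>a}x_b\big)$. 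Using $F_j(u)=1+2\sum_{t=1}^{j}\cos\frac{2\pi ut}{m}$ for $j<m/2$, $F_{m/2}(u)=1+(-1)^u+2\sum_{t=1}^{m/2-1}\cos\frac{2\pi ut}{m}$, and the differences $F_j(u)-F_{j-1}(u)=2\cos\frac{2\pi uj}{m}$ for $j<m/2$, $F_{m/2}(u)-F_{m/2-1}(u)=(-1)^u$, one recognizes $\sum_{\boldsymbol{u}\in\mathcal{C}_i'}\big(\prod_aF_j(u_{i_a})-\prod_aF_{j-1}(u_{i_a})\big)$ as $2\,LW_{\mathcal{C}_i',\pi}^{j}$ for $1\le j\le\lfloor(m-1)/2\rfloor$ and as $LW_{\mathcal{C}_i',\pi}^{m/2}$ when $m$ is even. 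Substituting all of this back, using $|\mathcal{C}_i|x^{s\lfloor m/2\rfloor}=W_{(\mathcal{C}_i,\pi)}(x,x;\mathbb{P})$, and separating $m$ odd (where $\lfloor m/2\rfloor=\lfloor(m-1)/2\rfloor$ and there is no $j=m/2$ term) from $m$ even yields exactly the two displayed formulas.

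The routine ingredients are the level-set count and the orthogonality relation $\sum_{\boldsymbol{v_j}}\chi(\boldsymbol{u_j}\cdot\boldsymbol{v_j})\in\{m^{\pi(j)},0\}$ of Lemma~\ref{ring}. The step I expect to be the main obstacle, and would write out most carefully, is the combinatorial bookkeeping in the $\mathcal{C}_i'$ case: orienting the telescoping so that $\prod_{b<a}$ carries the truncated sum $\sum_{t=1}^{j-1}$ while $\prod_{b>a}$ carries $\sum_{t=1}^{j}$, exactly matching the definition of $LW_{\mathcal{C}_i',\pi}^{j}$, together with the separate treatment of the boundary value $j=m/2$ for even $m$, where $F_{m/2}-F_{m/2-1}$ collapses to $(-1)^u$ and the factor $2$ disappears. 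A secondary point worth checking with care is the identification of the surviving index set with $\mathcal{C}_i\sqcup\mathcal{C}_i'$, which is what converts the two block sums into $W_{(\mathcal{C}_i,\pi)}(x,x;\mathbb{P})$ and $LW_{\mathcal{C}_i',\pi}^{j}$.
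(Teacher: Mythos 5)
Your proposal is correct and takes essentially the same approach as the paper: Fourier inversion via Lemma \ref{dual}, decomposition of $\boldsymbol{v}$ according to its first nonzero block, orthogonality (Lemma \ref{ring}) to reduce the outer sum to $\mathcal{C}_i\cup\mathcal{C}_i'$, and a blockwise evaluation of the remaining character sum. Your telescoping identity is just the algebraic form of the paper's partition of $E_{ij}$ into the sets $E_{ija}$ indexed by the first coordinate attaining Lee weight $j$, so the two derivations coincide.
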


\begin{proof}
Consider the function $f: V\rightarrow\mathcal {C}(x,y)$ defined by
$$f(\boldsymbol{u})= x^{s\lfloor\frac{m}{2}\rfloor-w_{(\widetilde{\mathbb{P}},\pi)}(\boldsymbol{u})} y^{w_{(\widetilde{\mathbb{P}},\pi)}(\boldsymbol{u})}.$$
Then it follows from Lemma \ref{dual} that
\begin{eqnarray}\label{e31}
W_{(\mathcal {C}^{\bot},\pi)}(x,y;\widetilde{\mathbb{P}})=\sum\limits_{\boldsymbol{u}\in\mathcal {C}^{\bot}}f(\boldsymbol{u})=\frac{1}{|\mathcal {C}|}\sum\limits_{\boldsymbol{u}\in\mathcal {C}}\hat{f}(\boldsymbol{u}).
\end{eqnarray}
We now analyze the value $\hat{f}(\boldsymbol{u})$ in detail. We have that
\begin{eqnarray*}
\hat{f}(\boldsymbol{u})&=&\sum\limits_{\boldsymbol{v}\in V}\chi(\boldsymbol{u}\cdot\boldsymbol{v}) x^{s\lfloor\frac{m}{2}\rfloor-w_{(\widetilde{\mathbb{P}},\pi)}(\boldsymbol{v})} y^{w_{(\widetilde{\mathbb{P}},\pi)}(\boldsymbol{v})}\\
&=&x^{s\lfloor\frac{m}{2}\rfloor}\left[1+\sum\limits_{\textbf{0}\neq \boldsymbol{v}\in\ V}\chi(\boldsymbol{u}\cdot\boldsymbol{v})\left(\frac{y}{x}\right)^{w_{(\widetilde{\mathbb{P}},\pi)(\boldsymbol{v})}}\right].
\end{eqnarray*}
Given $i\in[s]$, $j\in\left[\lfloor\frac{m}{2}\rfloor\right]$ and $a\in[\pi(i)]$, set:
$$\left\{
 \begin{array}{l}
 D_{i}=\{\boldsymbol{v}\in V\setminus\{\textbf{0}\}: \min\{k:\boldsymbol{v_k}\neq \textbf{0}\}=i\};\\[3mm]
 E_{ij}=\{\boldsymbol{v}\in D_i:w_{(L,\pi)}(\boldsymbol{v_i})=j\};\\[3mm]
 E_{ija}=\{\boldsymbol{v}\in E_{ij}: \min\{k:w_L(v_{i_k})=j\}=a\}.
 \end{array}
 \right.$$
With these definitions, we have
\begin{eqnarray*}
\hat{f}(\boldsymbol{u})&=& x^{s\lfloor\frac{m}{2}\rfloor}\left[1+\sum\limits_{i=1}^s\sum\limits_{\boldsymbol{v}\in D_i}\chi(\boldsymbol{u}\cdot\boldsymbol{v}) \left(\frac{y}{x}\right)^{(s-i)\lfloor\frac{m}{2}\rfloor+ w_{(L,\pi)}(\boldsymbol{v_i})}\right]\\
&=&x^{s\lfloor\frac{m}{2}\rfloor} \left[1+\sum\limits_{i=1}^s\sum\limits_{j=1}^{\lfloor\frac{m}{2}\rfloor} \sum\limits_{\boldsymbol{v}\in E_{ij}}\left(\frac{y}{x}\right)^{(s-i)\lfloor\frac{m}{2}\rfloor+j} \chi(\boldsymbol{u_i}\cdot\boldsymbol{v_i}+ \cdots+\boldsymbol{u_s}\cdot\boldsymbol{v_s})\right]\\
&=&x^{s\lfloor\frac{m}{2}\rfloor} \left[1+\sum\limits_{i=1}^s\sum\limits_{j=1}^{\lfloor\frac{m}{2}\rfloor} \left(\frac{y}{x}\right)^{(s-i)\lfloor\frac{m}{2}\rfloor+j}\sum\limits_{\boldsymbol{v}\in E_{ij}}\chi(\boldsymbol{u_i}\cdot\boldsymbol{v_i}+\cdots+ \boldsymbol{u_s}\cdot\boldsymbol{v_s})\right]\\
&=&x^{s\lfloor\frac{m}{2}\rfloor} \left[1+\sum\limits_{i=1}^s\sum\limits_{j=1}^{\lfloor\frac{m}{2}\rfloor} \left(\frac{y}{x}\right)^{(s-i)\lfloor\frac{m}{2}\rfloor+j} \sum\limits_{\textbf{0}\neq \boldsymbol{v_i}\in V_i\atop w_{(L,\pi)}(\boldsymbol{v_i})=j}\chi(\boldsymbol{u_i}\cdot\boldsymbol{v_i}) \sum\limits_{\boldsymbol{v'}\in\mathbb{Z}_m^{\pi(i+1)+\cdots+\pi(s)}} \chi(\boldsymbol{u'}\cdot\boldsymbol{v'})\right],
\end{eqnarray*}
where $\boldsymbol{u'}=(\boldsymbol{u_{i+1}},\ldots,\boldsymbol{u_s}) \in\mathbb{Z}_m^{\pi(i+1)+\cdots+\pi(s)}$. It follows from Lemma \ref{ring} that
\begin{eqnarray}\label{e32}
\sum\limits_{\boldsymbol{u}\in\mathcal {C}}\hat{f}(\boldsymbol{u})=|\mathcal {C}|x^{s\lfloor\frac{m}{2}\rfloor}+\sum\limits_{i=1}^sx^{i\lfloor\frac{m}{2}\rfloor} y^{(s-i)\lfloor\frac{m}{2}\rfloor}m^{\pi(i+1)+\cdots+\pi(s)}\sum\limits_{j=1}^{\lfloor\frac{m}{2}\rfloor} \left(\frac{y}{x}\right)^{j}\sum\limits_{\boldsymbol{u}\in \mathcal {C}_i\cup\mathcal {C}_i^{'}}\sum\limits_{\textbf{0}\neq \boldsymbol{v_i}\in V_i\atop w_{(L,\pi)}(\boldsymbol{v_i})=j}\chi(\boldsymbol{u_i}\cdot\boldsymbol{v_i}).
\end{eqnarray}
Set
$$Z_i=\sum\limits_{j=1}^{\lfloor\frac{m}{2}\rfloor}\left(\frac{y}{x}\right)^{j} \sum\limits_{\boldsymbol{u}\in \mathcal {C}_i\cup\mathcal {C}_i^{'}} \sum\limits_{\textbf{0}\neq \boldsymbol{v_i}\in V_i\atop w_{(L,\pi)}(\boldsymbol{v_i})=j}\chi(\boldsymbol{u_i}\cdot\boldsymbol{v_i}).$$
If $m$ is odd then we observe that
\begin{eqnarray*}
Z_i&=&\sum\limits_{j=1}^{\lfloor\frac{m}{2}\rfloor}\left(\frac{y}{x}\right)^{j} \sum\limits_{\boldsymbol{u}\in \mathcal {C}_i\cup\mathcal {C}_i^{'}}\sum\limits_{a=1}^{\pi(i)}\sum\limits_{\boldsymbol{v}\in E_{ija}}\chi(\boldsymbol{u_i}\cdot\boldsymbol{v_i})\\
&=&\sum\limits_{j=1}^{\lfloor\frac{m}{2}\rfloor}\left(\frac{y}{x}\right)^{j} \sum\limits_{\boldsymbol{u}\in \mathcal {C}_i\cup\mathcal {C}_i^{'}}\sum\limits_{a=1}^{\pi(i)}\sum\limits_{\boldsymbol{v_i}\in V_i, w_L(v_{i_a})=j;\atop w_L(v_{i_b})<j\ \text{for}\ b<a; w_{L}(v_{i_b})\leq j\ \text{for}\ b>a}\chi(\boldsymbol{u_i}\cdot\boldsymbol{v_i})\\
&=&2\sum\limits_{j=1}^{\lfloor\frac{m}{2}\rfloor}\left(\frac{y}{x}\right)^{j} \sum\limits_{\boldsymbol{u}\in \mathcal {C}_i\cup\mathcal {C}_i^{'}}\sum\limits_{a=1}^{\pi(i)}\cos\frac{2\pi u_{i_a}j}{m}\prod\limits_{b<a}\left(1+2\sum\limits_{t=1}^{j-1}\cos\frac{2\pi u_{i_b}t}{m}\right)\prod\limits_{b>a}\left(1+2\sum\limits_{t=1}^{j}\cos\frac{2\pi u_{i_b}t}{m}\right)\\
&=&2\sum\limits_{j=1}^{\lfloor\frac{m}{2}\rfloor}\left(\frac{y}{x}\right)^{j}\Bigg[ |\mathcal {C}_i|\frac{(2j+1)^{\pi(i)}-(2j-1)^{\pi(i)}}{2}+\\
&&\sum\limits_{\boldsymbol{u}\in\mathcal {C}_i^{'}}\sum\limits_{a=1}^{\pi(i)}\cos\frac{2\pi u_{i_a}j}{m}\prod\limits_{b<a}\left(1+2\sum\limits_{t=1}^{j-1}\cos\frac{2\pi u_{i_b}t}{m}\right)\prod\limits_{b>a}\left(1+2\sum\limits_{t=1}^{j}\cos\frac{2\pi u_{i_b}t}{m}\right)\Bigg].
\end{eqnarray*}
The result then follows from (\ref{e31}) and (\ref{e32}). The case for $m$ even can be proved in a similar way.
\end{proof}

\begin{remark}
Note that when we consider the pomset block metric over $\mathbb{Z}_2^s$ and $\mathbb{Z}_3^s$, the pomset block metric will coincide with poset block metric. Theorem \ref{MAC} is consistent with the result in [\ref{posetmac}].
\end{remark}

\begin{example}
Let $M=\{2/1,2/2\}$ and $\mathbb{P}=(M,R)$ be a pomset whose order relation is chain relation. Let $\pi$ be a labeling of the pomset $\mathbb{P}$ such that $\pi(1)=2$ and $\pi(2)=1$. Consider the $(\mathbb{P},\pi)$-code $\mathcal {C}\subseteq\mathbb{Z}_4^3$ given by
$$\mathcal {C}=\{000, 112, 220, 332\}.$$
Then by Theorem \ref{MAC}, one has
\begin{eqnarray*}
W_{(\mathcal {C}^{\bot},\pi)}(x,y;\widetilde{\mathbb{P}})&=&x^4+ \sum\limits_{i=1}^2\frac{4^{\pi(i+1)+\cdots+\pi(2)}}{4}\left(\frac{y}{x}\right)^{4-2i}\\
&&\left[\left(2\beta_{i1}\frac{y}{x}+\gamma_i\left(\frac{y}{x}\right)^2\right)W_{(\mathcal {C}_i,\pi)}(x,x;\mathbb{P})+\left(2\left(\frac{y}{x}\right)LW_{\mathcal {C}_i^{'},\pi}^1+\left(\frac{y}{x}\right)^2LW_{\mathcal {C}_i^{'},\pi}^{2}\right)x^4\right].
\end{eqnarray*}
Note that $\mathcal {C}_1=\{000\}$, $\mathcal {C}_2=\{000,220\}$, $\mathcal {C}_1^{'}=\{220\}$ and $\mathcal {C}_2^{'}=\{112,332\}$. Hence
\begin{eqnarray*}
W_{(\mathcal {C}^{\bot},\pi)}(x,y;\widetilde{\mathbb{P}})&=&x^4+x^2y^2+8xy^3+6y^4.
\end{eqnarray*}
On the other hand, the dual of $\mathcal {C}$ is
$$\mathcal {C}^{\bot}=\{000, 111, 222, 333, 130, 220, 310, 021, 002, 023, 201, 203, 113, 132, 312, 331\}.$$
The $(\widetilde{\mathbb{P}},\pi)$-weight enumerator for $\mathcal {C}^{\bot}$ is then
$$W_{(\mathcal {C}^{\bot},\pi)}(x,y;\widetilde{\mathbb{P}})=x^4+x^2y^2+8xy^3+6y^4,$$
which coincides with Theorem \ref{MAC}.
$\hfill\square$
\end{example}

\begin{example}
Let $M=\{2/1,2/2\}$ and $\mathbb{P}=(M,R)$ be a pomset whose order relation is chain relation. Let $\pi$ be a labeling of the pomset $\mathbb{P}$ such that $\pi(1)=1$ and $\pi(2)=2$. Consider the $(\mathbb{P},\pi)$-code $\mathcal {C}\subseteq\mathbb{Z}_5^3$ given by
$$\mathcal {C}=\{000,132,214,341,423\}.$$
Then we have $\mathcal {C}_1=\{000\}=\mathcal {C}_2$, $\mathcal {C}_1^{'}=\emptyset$ and $\mathcal {C}_2^{'}=\{132, 214, 341, 423\}$. It follows from Theorem \ref{MAC} that
\begin{eqnarray*}
W_{(\mathcal {C}^{\bot},\pi)}(x,y;\widetilde{\mathbb{P}})&=& x^4+\sum\limits_{i=1}^2\frac{2\times 5^{\pi(i+1)+\cdots+\pi(2)}}{5}\sum\limits_{j=1}^{2}\left(\frac{y}{x}\right)^{4-2i+j} \left[\beta_{ij}W_{(\mathcal {C}_i,\pi)}(x,x;\mathbb{P})+LW_{\mathcal {C}_i^{'},\pi}^jx^4\right]\\
&=&x^4+2x^3y+2x^2y^2+10xy^3+10y^4.
\end{eqnarray*}
On the other hand, the dual of $\mathcal {C}$ is
$$\mathcal {C}^{\bot}=\{000,102,204,301,403,011,022,033,044,113,221,334,$$
$$442,124,243,312,431,130,210, 340,420,141,232,323,414\}.$$
The $(\widetilde{\mathbb{P}},\pi)$-weight enumerator for $\mathcal {C}^{\bot}$ is then
$$W_{(\mathcal {C}^{\bot},\pi)}(x,y;\widetilde{\mathbb{P}})=x^4+2x^3y+2x^2y^2+10xy^3+10y^4,$$
which also coincides with Theorem \ref{MAC}.
$\hfill\square$
\end{example}

\begin{remark}
Let $\mathbb{P}=(M,R)$ be a pomset on $M=\left\{\lfloor\frac{m}{2}\rfloor/1,\lfloor\frac{m}{2}\rfloor/2,\ldots, \lfloor\frac{m}{2}\rfloor/s\right\}$ with chain relation and let $\pi$ be a labeling of the pomset $\mathbb{P}$ with $\pi(i)=1$ for $i\in[s]$. Let $\mathcal {C}$ be a linear $(\mathbb{P},\pi)$-code. Then the $(\widetilde{\mathbb{P}},\pi)$-weight enumerator for $\mathcal {C}^{\bot}$ is given by the followings.
\begin{itemize}
\item when $m$ is odd,
$$W_{(\mathcal {C}^{\bot},\pi)}(x,y;\widetilde{\mathbb{P}})=x^{s\lfloor\frac{m}{2}\rfloor}+ \sum\limits_{i=1}^{s}\frac{2m^{s-i}}{|\mathcal {C}|}\sum\limits_{j=1}^{\lfloor\frac{m}{2}\rfloor}\left(\frac{y}{x}\right)^ {(s-i)\lfloor\frac{m}{2}\rfloor+j}\left(W_{(\mathcal {C}_i,\pi)}(x,x;\mathbb{P})+\sum\limits_{l=1}^{\lfloor\frac{m}{2}\rfloor}\cos\frac{2\pi lj}{m}W_{\mathcal {C}_{il}^{'}}(x,x;\mathbb{P})\right).$$

\item when $m$ is even,
$$W_{(\mathcal {C}^{\bot},\pi)}(x,y;\widetilde{\mathbb{P}})=x^{s\lfloor\frac{m}{2}\rfloor}+ \sum\limits_{i=1}^s\frac{m^{s-i}}{|\mathcal {C}|}\left(\frac{y}{x}\right)^{(s-i)\frac{m}{2}}$$ $$\left[\left(2\sum\limits_{j=1}^{\frac{m}{2}-1}\left(\frac{y}{x}\right)^j+ \left(\frac{y}{x}\right)^{\frac{m}{2}}\right)W_{(\mathcal {C}_i,\pi)}(x,x;\mathbb{P})+ \left(2\sum\limits_{j=1}^{\frac{m}{2}-1}\left(\frac{y}{x}\right)^j \sum\limits_{u\in\mathcal {C}_i^{'}}\cos(\frac{2\pi u_{i}j}{m})+\left(\frac{y}{x}\right)^{\frac{m}{2}}\sum\limits_{u\in\mathcal {C}_i^{'}}(-1)^{u_i}\right)x^{\frac{sm}{2}}\right].$$
\end{itemize}
The result is the same as the MacWilliams type identity for the case of pomset metric (see [\ref{POMSETMAC}]).
\end{remark}

Before we proceed to prove the next corollary, we assume the following notations:
\begin{enumerate}[(1)]
\item Given $j\in\left[\lfloor\frac{m}{2}\rfloor\right]$ even. Consider the $i$-th block $(u_{i_1},u_{i_2})$ of $\boldsymbol{u}\in\mathcal {C}$.
\begin{itemize}
\item  $\mathcal {C}_{ij}^1=\{\boldsymbol{u}\in\mathcal {C}_i^{'}: (u_{i_1},u_{i_1})=(1,\pm t)\}$, where $t$ satisfies one of the following conditions:
\begin{enumerate}[a)]
\item $\frac{lm+1}{j}-1\leq t\leq\frac{lm-1}{j}+1\ (1\leq l\leq \frac{j}{2}-1),\ t\mid j\ \text{or}\ t\nmid j\ \text{but}\ \exists\ a\in[-j,j]\ s.t.\ at+j\equiv 0\ (\text{mod}\ m);$
\item $\frac{jm+2}{2j}-1\leq t\leq\frac{m-1}{2}, \ t\mid j\ \text{or}\ t\nmid j\ \text{but}\ \exists\ a\in[-j,j]\ s.t.\ at+j\equiv 0\ (\text{mod}\ m).$
\end{enumerate}

\item  $\mathcal {C}_{ij}^2=\{\boldsymbol{u}\in\mathcal {C}_i^{'}: (u_{i_1},u_{i_1})=(1,\pm t)\}$, where $t$ satisfies one of the following conditions:
\begin{enumerate}[a)]
\item $1\leq t\leq\frac{m+1}{j}-1$, $t\nmid j$;
\item $\frac{lm-1}{j}+1<t<\frac{(l+1)m+1}{j}-1$ $(1\leq l\leq\frac{j}{2}-1)$, $t\nmid j$ and there does not exist $a\in[-j,j]\ s.t.\ at+j\equiv 0$ (mod $m$).
\end{enumerate}

\item $\mathcal {C}_{ij}^3=\{\boldsymbol{u}\in\mathcal {C}_i^{'}: (u_{i_1},u_{i_2})=(1,\pm t),\ \boldsymbol{u}\notin \mathcal {C}_{ij}^{1}\cup\mathcal {C}_{ij}^{2}\}\cup\{\boldsymbol{u}\in\mathcal {C}_{i}^{'}: (u_{i_1},u_{i_2})=(0,1)\}$.
\end{itemize}

\item Given $j\in\left[3,\lfloor\frac{m}{2}\rfloor\right]$ odd. Consider the $i$-th block $(u_{i_1},u_{i_2})$ of $\boldsymbol{u}\in\mathcal {C}$.
\begin{itemize}
\item $\mathcal {C}_{ij}^1=\{\boldsymbol{u}\in\mathcal {C}_i^{'}:(u_{i_1},u_{i_2})=(1,\pm t)\}$ where $t$ satisfies
$$\frac{lm+1}{j}-1\leq t\leq\frac{lm-1}{j}+1\ (1\leq l\leq \frac{j-1}{2}),\ t\mid j\ \text{or}\ t\nmid j\ \text{but}\ \exists\ a\in[-j,j]\ s.t.\ at+j\equiv 0\ (\text{mod}\ m);$$
\item $\mathcal {C}_{ij}^2=\{\boldsymbol{u}\in\mathcal {C}_i^{'}:(u_{i_1},u_{i_2})=(1,\pm t)\}$ where $t$ satisfies one of the following conditions:
\begin{enumerate}[a)]
\item $1\leq t<\frac{m+1}{j}-1$, $t\nmid j$;
\item $\frac{lm-1}{j}+1\leq t\leq \frac{(l+1)m+1}{j}-1$ ($1\leq l\leq\frac{j-3}{2}$), $t\nmid j$ and there does not exist $a\in[-j,j]\ s.t. \ at+j\equiv 0$ (mod $m$).
\item $\frac{(j-1)m-2}{2j}+1<t\leq\frac{m-1}{2}$, $t\nmid j$ and there exists no $a\in[-j,j]\ s.t.\ at+j\equiv 0$ (mod $m$).
\end{enumerate}
\item $\mathcal {C}_{ij}^3=\{\boldsymbol{u}\in\mathcal {C}_i^{'}:(u_{i_1},u_{i_2})=(1,\pm t),\ \boldsymbol{u}\notin \mathcal {C}_{ij}^1\cup\mathcal {C}_{ij}^2\}\cup\{\boldsymbol{u}\in\mathcal {C}_i^{'}: (u_{i_1},u_{i_2})=(0,1)\}$.
\end{itemize}

\item For $j=1$, we put $\mathcal {C}_{i1}^1=\emptyset$,
$$\mathcal {C}_{i1}^2=\left\{\boldsymbol{u}\in\mathcal {C}_i^{'}:(u_{i_1},u_{i_2})=(1,t), t\in\mathbb{Z}_m\setminus \{\pm1,0\}\right\},$$
and
$$\mathcal {C}_{i1}^3=\left\{\boldsymbol{u}\in\mathcal {C}_i^{'}:(u_{i_1},u_{i_2})=(1,t), t\in\{\pm 1,0\}\ \text{or}\ (u_{i_1},u_{i_2})=(0,1)\right\}.$$
\end{enumerate}

\begin{corollary}\label{field}
Let $\mathbb{P}=(M,R)$ be a pomset on $M=\left\{\lfloor\frac{m}{2}\rfloor/1,\lfloor\frac{m}{2}\rfloor/2,\ldots, \lfloor\frac{m}{2}\rfloor/2s\right\}$ with chain relation and $\pi$ be a labeling of the pomset $\mathbb{P}$ with $\pi(i)=2$ for $i\in[s]$. Let $\mathcal {C}$ be a linear $(\mathbb{P},\pi)$-code. If $\mathbb{Z}_m$ is a field (that is, $m$ is a prime number), then the $(\widetilde{\mathbb{P}},\pi)$-weight enumerator for $\mathcal {C}^{\bot}$ is given by:
$$W_{(\mathcal {C}^{\bot},\pi)}(x,y;\widetilde{\mathbb{P}})=x^{s\lfloor\frac{m}{2}\rfloor}+ \sum\limits_{i=1}^s\frac{2m^{2(s-i)}}{|\mathcal {C}|} \sum\limits_{j=1}^{\lfloor\frac{m}{2}\rfloor} \left(\frac{y}{x}\right)^{(s-i)\lfloor\frac{m}{2}\rfloor+j}$$
$$\Bigg[4jW_{(\mathcal {C}_i^{\bot},\pi)}(x,y;\mathbb{P})+(2m-4j)W_{({\mathcal {C}_{ij}^{1}}^{\bot},\pi)}(x,y;\mathbb{P})-4jW_{({\mathcal {C}_{ij}^{2}}^{\bot},\pi)}(x,y;\mathbb{P})+(m-4j)W_{({\mathcal {C}_{ij}^{3}}^{\bot},\pi)}(x,y;\mathbb{P})\Bigg].$$
\end{corollary}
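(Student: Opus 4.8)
The plan is to specialize Theorem~\ref{MAC} to the present situation ($m$ prime, all block dimensions equal to $2$, chain pomset of length $2s$) and then to re-express the two data-dependent quantities $W_{(\mathcal{C}_i,\pi)}(x,x;\mathbb{P})$ and $LW_{\mathcal{C}_i^{'},\pi}^{j}$ in terms of weight enumerators of the dual codes $\mathcal{C}_i^{\bot}$, ${\mathcal{C}_{ij}^{1}}^{\bot}$, ${\mathcal{C}_{ij}^{2}}^{\bot}$, ${\mathcal{C}_{ij}^{3}}^{\bot}$. First I would observe that since $\mathbb{Z}_m$ is a field, each of $\mathcal{C}_i$, $\mathcal{C}_i^{'}$ (and its refinements $\mathcal{C}_{ij}^{1},\mathcal{C}_{ij}^{2},\mathcal{C}_{ij}^{3}$) is a linear subspace, so Lemma~\ref{dual} applies to each of them individually; this is what will let me trade a sum of a character-type expression over $\mathcal{C}_i^{'}$ for a weight enumerator of the dual. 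Note that $\mathcal{C}_i^{'}=\mathcal{C}_{ij}^{1}\sqcup\mathcal{C}_{ij}^{2}\sqcup\mathcal{C}_{ij}^{3}$ is a partition (for each fixed $j$), which is the combinatorial bookkeeping behind the four-term linear combination in the statement; the coefficients $4j$, $2m-4j$, $-4j$, $m-4j$ must come out of evaluating the inner trigonometric products of Theorem~\ref{MAC} on each of the three classes and collecting like terms.

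The heart of the computation is the double product
$$\sum_{a=1}^{2}\cos\frac{2\pi u_{i_a}j}{m}\prod_{b<a}\Bigl(1+2\sum_{t=1}^{j-1}\cos\frac{2\pi u_{i_b}t}{m}\Bigr)\prod_{b>a}\Bigl(1+2\sum_{t=1}^{j}\cos\frac{2\pi u_{i_b}t}{m}\Bigr)$$
appearing in $LW_{\mathcal{C}_i^{'},\pi}^{j}$, which for $\pi(i)=2$ has only the two terms $a=1$ and $a=2$. I would use the Dirichlet-kernel identity $1+2\sum_{t=1}^{k}\cos\theta t=\frac{\sin((2k+1)\theta/2)}{\sin(\theta/2)}$ (equivalently the geometric-sum evaluation $\sum_{t=-k}^{k}e^{i\theta t}$) to see that, as $u_{i_b}$ ranges over $\mathbb{Z}_m$, the partial sum $1+2\sum_{t=1}^{k}\cos\frac{2\pi u_{i_b}t}{m}$ takes the value $2k+1$ when $u_{i_b}=0$ and otherwise is a sum of $m$-th roots of unity that one can control via whether certain linear congruences $at+j\equiv 0\pmod m$ are solvable — this is exactly where the unwieldy case distinctions defining $\mathcal{C}_{ij}^{1},\mathcal{C}_{ij}^{2},\mathcal{C}_{ij}^{3}$ come from. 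On each class the expression collapses (up to the stated integer multiples) to $\sum_{a}\cos\frac{2\pi u_{i_a}j}{m}$ times a constant, and summing $\cos\frac{2\pi u_{i_a}j}{m}$ over a linear subspace is precisely a Fourier-transform computation that Lemma~\ref{dual} converts into a weight enumerator of the dual of that subspace; the leftover term $W_{(\mathcal{C}_i,\pi)}(x,x;\mathbb{P})$ from Theorem~\ref{MAC} is likewise rewritten as $\frac{|\mathcal{C}_i|}{|\mathcal{C}|}$-normalized $W_{(\mathcal{C}_i^{\bot},\pi)}(x,y;\mathbb{P})$, which is where the $4jW_{(\mathcal{C}_i^{\bot},\pi)}$ summand originates.

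After the per-class evaluations, the remaining work is purely bookkeeping: substitute back into formula~(1) (resp.\ (2)) of Theorem~\ref{MAC}, use $\lfloor m/2\rfloor=(m-1)/2$ for the prime $m$, replace $s$ by $2s$ blocks paired into $s$ blocks of dimension $2$ (so that $\pi(i+1)+\cdots+\pi(s)=2(s-i)$, giving the factor $m^{2(s-i)}$), and collect the coefficients of $W_{(\mathcal{C}_i^{\bot},\pi)}$, $W_{({\mathcal{C}_{ij}^{1}}^{\bot},\pi)}$, $W_{({\mathcal{C}_{ij}^{2}}^{\bot},\pi)}$, $W_{({\mathcal{C}_{ij}^{3}}^{\bot},\pi)}$ for each $i$ and $j$. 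I expect the main obstacle to be precisely the case analysis in the second paragraph: showing that on $\mathcal{C}_{ij}^{1}$ the partial Dirichlet product contributes the "large'' value $2m-4j$, on $\mathcal{C}_{ij}^{2}$ it contributes $-4j$, and on $\mathcal{C}_{ij}^{3}$ (the residual class together with the $(0,1)$ vectors) it contributes $m-4j$, will require a careful, and somewhat tedious, analysis of which congruences $at+j\equiv0\pmod m$ with $a\in[-j,j]$ are solvable — this is the content of the conditions a), b) etc.\ in the notation preceding the corollary — together with the observation that a single unsolvable-congruence term in the geometric sum forces a full cancellation. Everything else is routine substitution and the repeated application of Lemma~\ref{dual}.
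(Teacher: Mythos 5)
Your plan correctly identifies the starting point (specializing Theorem~\ref{MAC} to $\pi(i)=2$, where $\beta_{ij}=4j$) and correctly senses that the coefficients $2m-4j$, $-4j$, $m-4j$ must come from evaluating the Dirichlet-kernel products and from the solvability of congruences $at+j\equiv 0\pmod m$. But there is a genuine gap in the mechanism you propose for converting the sum over $\mathcal{C}_i^{'}$ into the four-term combination. You assert that $\mathcal{C}_i^{'}$ and its refinements $\mathcal{C}_{ij}^{1},\mathcal{C}_{ij}^{2},\mathcal{C}_{ij}^{3}$ are linear subspaces to which Lemma~\ref{dual} can be applied, and that on each class the product collapses to a constant times $\sum_a\cos\frac{2\pi u_{i_a}j}{m}$, which a further Fourier/dual computation turns into a dual weight enumerator. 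None of these sets is a subspace: $\mathcal{C}_i^{'}$ excludes $\boldsymbol{0}$ and is not closed under addition, and the classes $\mathcal{C}_{ij}^{k}$ are defined by fixing a normalized block representative $(u_{i_1},u_{i_2})=(1,\pm t)$ or $(0,1)$, so they are not even closed under scalar multiplication. Consequently Lemma~\ref{dual} cannot be invoked for them, and the route you describe would not produce the stated coefficients; the quantities appearing in the final formula play the role of counts of classes (in the worked example they reduce to cardinalities $|\mathcal{C}_{ij}^{k}|$), not of genuinely Fourier-dualized enumerators.

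What the paper actually does at this point is different and is the step your proposal leaves undone. Because $m$ is prime, $\mathcal{C}_i^{'}$ is partitioned into punctured lines $\boldsymbol{u}\mathbb{Z}_m\setminus\{\boldsymbol{0}\}$, with the $i$-th block normalized to $(1,t)$, $(1,0)$ or $(0,1)$. The full product $Z_{ij}^{\boldsymbol{v}}$ is summed over each such line: writing the Dirichlet kernels as geometric sums of powers of $\omega=e^{2\pi i/m}$, the sum over the line becomes a sum over the Galois group of $\mathbb{Q}(\omega)$, i.e.\ one evaluates $\sum_{r=1}^{m-1}\omega^{re}$ for the exponents $e=j-tj+kt$ with $k\in[0,2j]$ and $e=tj-j+k$ with $k\in[1,2j-1]$, each contributing $m-1$ if $e\equiv 0\pmod m$ and $-1$ otherwise. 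Counting how many of these $4j$ exponents vanish modulo $m$ (zero, one, or two occurrences, which is precisely what the conditions defining $\mathcal{C}_{ij}^{1},\mathcal{C}_{ij}^{2},\mathcal{C}_{ij}^{3}$ encode) yields exactly the per-line values $2m-4j$, $-4j$, $m-4j$, while the blocks of type $(1,0)$ and $(0,1)$ are handled by a direct computation giving $m-4j$. Your proposal gestures at the right congruences but replaces this line-by-line Galois/root-of-unity count with an inapplicable appeal to Lemma~\ref{dual}, and it never carries out the evaluation that is the actual content of the corollary.
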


\begin{proof}
By Theorem \ref{MAC}, we have
\begin{equation*}\label{equ1}
  \begin{aligned}
  & W_{(\mathcal{C}^{\bot},\pi)}(x,y;\widetilde{\mathbb{P}})=x^{s\lfloor\frac{m}{2}\rfloor}+ \sum\limits_{i=1}^s \frac{2m^{2(s-i)}}{|\mathcal {C}|}\sum \limits_{j=1}^{\lfloor\frac{m}{2}\rfloor}\left(\frac{y}{x}\right)^{(s-i) \lfloor\frac{m}{2}\rfloor+j}&\\
  & \left[4jW_{(\mathcal {C}_i,\pi)}(x,y;\mathbb{P})+\sum\limits_{u\in\mathcal {C}_i^{'}}\sum\limits_{a=1}^{2}\cos\frac{2\pi u_{i_a}j}{m}\prod\limits_{b<a}\left(1+2\sum\limits_{l=1}^{j-1}\cos\frac{2\pi u_{i_b}l}{m}\right)\prod\limits_{b>a}\left(1+2\sum\limits_{l=1}^{j}\cos\frac{2\pi u_{i_b}l}{m}\right)\right].&
  \end{aligned}
  \end{equation*}
Denote by
$$Z_{ij}^{\boldsymbol{u}}=\sum\limits_{a=1}^{2}\cos\frac{2\pi u_{i_a}j}{m}\prod\limits_{b<a}\left(1+2\sum\limits_{l=1}^{j-1}\cos\frac{2\pi u_{i_b}l}{m}\right)\prod\limits_{b>a}\left(1+2\sum\limits_{l=1}^{j}\cos\frac{2\pi u_{i_b}l}{m}\right).$$
For every $\boldsymbol{u}\in\mathcal {C}_i^{'}$, we have that $u_{i_1}$ and $u_{i_2}$ are not all $0$.
Take $\boldsymbol{u}\in\mathcal {C}_i^{'}$ whose $i$-th block $(u_{i_1},u_{i_2})$ has the form $(u_{i_1},u_{i_2})=(1,0)$. Since $\mathbb{Z}_m$ is a field, every element of $\mathbb{Z}_m\setminus\{0\}$ appears same times at $u_{i_1}$. Thus we have
\begin{eqnarray*}
\sum\limits_{\boldsymbol{u}\in\mathcal {C}_i^{'},u_{i_2}=0}Z_{ij}^{\boldsymbol{u}}&=&\sum\limits_{\boldsymbol{u}\in\mathcal {C}_i^{'},\atop (u_{i_1},u_{i_2})=(1,0)}\left[(2j+1)\sum\limits_{t=1}^{m-1}\cos\frac{2\pi tj}{m}+\sum\limits_{t=1}^{m-1}(1+2\sum\limits_{l=1}^{j-1}\cos\frac{2\pi tl}{m})\right]\\
&=&\sum\limits_{\boldsymbol{u}\in\mathcal {C}_i^{'},\atop (u_{i_1},u_{i_2})=(1,0)}\left[-2j-1+m-1+2\sum\limits_{l=1}^{j-1}\sum\limits_{t=1}^{m-1} \cos\frac{2\pi tl}{m}\right]\\
&=&\sum\limits_{\boldsymbol{u}\in\mathcal {C}_i^{'},\atop (u_{i_1},u_{i_2})=(1,0)}(m-2j-2-2j+2)\\
&=&(m-4j)\cdot\left|\left\{\boldsymbol{u}\in\mathcal {C}_i^{'}:(u_{i_1},u_{i_2})=(1,0)\right\}\right|.
\end{eqnarray*}
One can prove that $\sum\limits_{\boldsymbol{u}\in\mathcal {C}_i^{'},u_{i_1}=0}Z_{ij}=(m-4j)\cdot \left|\left\{\boldsymbol{u}\in\mathcal {C}_i^{'}:(u_{i_1},u_{i_2})=(0,1)\right\}\right|$ in the same way.

Suppose $\boldsymbol{u}\in\mathcal {C}_i^{'}$ satisfies that neither $u_{i_1}\neq 0$ nor $u_{i_2}\neq 0$. With out loss of generality, we assume that $(u_{i_1},u_{i_2})=(1,t)$ where $t\in\mathbb{Z}_m\setminus\{0\}$. Then
\begin{align}\label{unit}
\sum\limits_{\boldsymbol{v}\in \boldsymbol{u}\mathbb{Z}_m\setminus\{\textbf{0}\};\atop (u_{i_1},u_{i_2})=(1,t)}Z_{ij}^{\boldsymbol{v}}=&\sum\limits_{r=1}^{m-1}\cos\frac{2\pi rj}{m}\left(1+2\sum\limits_{l=1}^j\cos\frac{2\pi rtl}{m}\right)+\sum\limits_{r=1}^{m-1}\cos\frac{2\pi rtj}{m}\left(1+2\sum\limits_{l=1}^{j-1}\cos\frac{2\pi rl}{m}\right)\nonumber\\
=&\sum\limits_{r=1}^{m-1}\left[\cos\frac{2\pi rj}{m}\frac{e^{-i\frac{2\pi rtj}{m}}-e^{i\frac{2\pi rt(j+1)}{m}}}{1-e^{i\frac{2\pi rt}{m}}}+\cos\frac{2\pi rtj}{m}\frac{e^{-i\frac{2\pi r(j-1)}{m}}-e^{i\frac{2\pi rj}{m}}}{1-e^{i\frac{2\pi r}{m}}}\right]\nonumber\\
=&\sum\limits_{r=1}^{m-1}\sigma_r\left(\frac{\omega^j+\omega^{-j}}{2} \frac{\omega^{-tj}-\omega^{t(j+1)}}{1-\omega^t}\right)+ \sum\limits_{r=1}^{m-1}\sigma_r\left(\frac{\omega^{tj}+\omega^{-tj}}{2} \frac{\omega^{-(j-1)}-\omega^j}{1-\omega}\right),
\end{align}
where $\omega=e^{i\frac{2\pi}{m}}$ and $\{\sigma_r: 1\leq r\leq m-1\}$ is the Galois group of $m$-th cyclotomic field $\mathbb{Q}(\omega)$. Note that $\sum\limits_{\boldsymbol{v}\in \boldsymbol{u}\mathbb{Z}_m\setminus\{\textbf{0}\};\atop (u_{i_1},u_{i_2})=(1,t)}Z_{ij}^{\boldsymbol{v}}=\sum\limits_{\boldsymbol{v}\in \boldsymbol{u}\mathbb{Z}_m\setminus\{\textbf{0}\};\atop (u_{i_1},u_{i_2})=(1,-t)}Z_{ij}^{\boldsymbol{v}}$. The first summation in (\ref{unit}) becomes
\begin{eqnarray*}
&&\frac{1}{2}\sum\limits_{r=1}^{m-1}\sigma_r\left(\left(\omega^j+\omega^{-j})\omega^{-tj} (1+\omega^t+\omega^{2t}+\cdots+\omega^{2jt}\right)\right)\\
&=&\frac{1}{2}\sum\limits_{r=1}^{m-1}\sigma_r\left(\sum\limits_{k=0}^{2j}\omega^{j-tj+kt}+ \sum\limits_{k=0}^{2j}\omega^{-j-tj+kt}\right)\\
&=&\sum\limits_{k=0}^{2j}\sum\limits_{r=1}^{m-1}\sigma_r\left(\omega^{j-tj+kt}\right).
\end{eqnarray*}
The second summation in (\ref{unit}) becomes
\begin{eqnarray*}
&&\frac{1}{2}\sum\limits_{r=1}^{m-1}\sigma_r\left((\omega^{tj}+\omega^{-tj})\omega^{-(j-1)} (1+\omega+\omega^2+\cdots+\omega^{2j-2})\right)\\
&=&\frac{1}{2}\sum\limits_{r=1}^{m-1}\sigma_r\left(\sum\limits_{k=1}^{2j-1}\omega^{tj-j+k} +\sum\limits_{k=1}^{2j-1}\omega^{-tj-j+k}\right)\\
&=&\sum\limits_{k=1}^{2j-1}\sum\limits_{r=1}^{m-1}\sigma_r(\omega^{tj-j+k}).
\end{eqnarray*}
Hence
$$\sum\limits_{\boldsymbol{v}\in\boldsymbol{u}\mathbb{Z}_m\setminus\{\textbf{0}\};\atop (u_{i_1},u_{i_2})=(1,t)}Z_{ij}^{\boldsymbol{v}}= \sum\limits_{k=0}^{2j}\sum\limits_{r=1}^{m-1}\sigma_r(\omega^{j-tj+kt})+ \sum\limits_{k=1}^{2j-1}\sum\limits_{r=1}^{m-1}\sigma_r(\omega^{tj-j+k}).$$
It follows from this observation that
$$\sum\limits_{\boldsymbol{v}\in \boldsymbol{u}\mathbb{Z}_m\setminus\{\boldsymbol{0}\};\atop (u_{i_1},u_{i_2})=(1,\pm t)}Z_{ij}^{\boldsymbol{v}}=\left\{
                             \begin{array}{ll}
                              2m-4j, & \text{if}\ t\in\mathcal {C}_{ij}^1;\\[2mm]
                               -4j,   & \text{if}\ t\in\mathcal {C}_{ij}^2;\\[2mm]
                              m-4j, &\text{if}\ t\in\mathcal {C}_i^{'}\setminus
                              \mathcal {C}_{ij}^1\cup\mathcal {C}_{ij}^2.
                             \end{array}
                           \right.$$
The result then follows.
\end{proof}

\begin{example}
Let $M=\{2/1,2/2\}$ and $\mathbb{P}=(M,R)$ be a pomset whose order relation is chain relation. Let $\pi$ be a labeling of the pomset $\mathbb{P}$ such that $\pi(1)=\pi(2)=2$. Consider the $(\mathbb{P},\pi)$-code $\mathcal {C}\subseteq\mathbb{Z}_5^4$ generated by the following matrix:
$$\left(
  \begin{array}{cccc}
   1&0&1&1\\
   0&1&2&0
  \end{array}
\right).$$
Then it follows from Corollary \ref{field} that
$$W_{(\mathcal {C}^{\bot},\pi)}(x,y;\widetilde{\mathbb{P}})=x^4+\sum\limits_{i=1}^2\frac{2\times 5^{4-2i}}{25}x^{2i}y^{4-2i}\sum\limits_{j=1}^2\left(\frac{y}{x}\right)^j\left[4j|\mathcal {C}_i|+(10-4j)|\mathcal {C}_{ij}^1|-4j|\mathcal {C}_{ij}^2|+(5-4j)|\mathcal {C}_{ij}^3|\right].$$
Note that
$\mathcal {C}_{1}=\mathcal {C}_{2}=\{000\}$;
$\mathcal {C}_{11}^1=\mathcal {C}_{11}^2=\mathcal {C}_{11}^3=\mathcal {C}_{12}^1=\mathcal {C}_{12}^2=\mathcal {C}_{12}^3=\mathcal {C}_{21}^1=\mathcal {C}_{22}^2=\emptyset$;
$\mathcal {C}_{21}^2=\{2212,3413\}$;
$\mathcal {C}_{21}^3=\{1011,4114,0310,1201\}$;
$\mathcal {C}_{22}^1=\{2212,3413\}$;
$\mathcal {C}_{22}^3=\{1011,0310,1201,4114\}$.
Hence
\begin{eqnarray*}
W_{(\mathcal {C}^{\bot},\pi)}(x,y;\widetilde{\mathbb{P}})&=& x^4+2x^2y^2\left[4\left(\frac{y}{x}\right)+8\left(\frac{y}{x}\right)^2\right]+\\ &&\frac{2}{25}x^4\left[\left(\frac{y}{x}\right)(4+6|\mathcal {C}_{21}^1|-4|\mathcal {C}_{21}^2|+|\mathcal {C}_{21}^3|)+\left(\frac{y}{x}\right)^2(8+2|\mathcal {C}_{22}^1|-8|\mathcal {C}_{22}^2|-3|\mathcal {C}_{22}^3|\right]\\
&=&x^4+8xy^3+16y^4+\frac{2x^4}{25} \left[\left(\frac{y}{x}\right)(4-4\times2+4)+\left(\frac{y}{x}\right)^2(8+2\times2-3\times 4)\right]\\
&=&x^4+8xy^3+16y^4.
\end{eqnarray*}
On the other hand, the dual of $\mathcal {C}$ is
\begin{align*}
\mathcal {C}^{\bot}=&\{1004,2003,3002,4001,0123,0241,0314,0432,1122,2244,
3311,4433,\\
&1240,2430,3120,4310, 1313,2121,3434,4242,1431,2312,3243,4124\}
\end{align*}
and the $(\widetilde{\mathbb{P}},\pi)$-weight enumerator of $\mathcal {C}^{\bot}$ is
$$W_{(\mathcal {C}^{\bot},\pi)}(x,y;\widetilde{\mathbb{P}})=x^4+8xy^3+16y^4,$$
which coincides with Corollary \ref{field}.
$\hfill\square$
\end{example}

\section{MacWilliams type identities on direct and ordinal sum of general pomsets}

\quad\;For $i\in\{1,2\}$, let $(\mathbb{P}_i,\pi_i)$ be a pomset block structure over $[s_i]$ where $\pi_i:[s_i]\rightarrow\mathbb{N}$ is a map such that $\sum\limits_{j=1}^{s_i}\pi_i(j)=n_i$ and $\mathbb{P}_i=(M_i,R_i)$ with $M_i=\{\lfloor\frac{m}{2}\rfloor/1,\ldots,\lfloor\frac{m}{2}\rfloor/n_i\}$. Denote by $n=n_1+n_2$. Suppose that $\mathcal {C}_i$ is a linear $(\mathbb{P}_i,\pi_i)$-code. The direct sum of $\mathcal {C}_1$ and $\mathcal {C}_2$ denoted by $\mathcal {C}=\mathcal {C}_1\oplus\mathcal {C}_2$ is given by
$$\mathcal {C}=\{(\boldsymbol{u,v}): \boldsymbol{u}\in\mathcal {C}_1,\boldsymbol{v}\in\mathcal {C}_2\}.$$
Note that $\mathcal {C}$ is also a submodule of $V$.

Let $M=\left[\lfloor\frac{m}{2}\rfloor/1,\cdots,\lfloor\frac{m}{2}\rfloor/n_1, \lfloor\frac{m}{2}\rfloor/(n_1+1),\ldots,\lfloor\frac{m}{2}\rfloor/(n_1+n_2)\right]$. Define a pomset relation $R$ on $M$ in the following way:
$$p/i\ R\ q/j\Leftrightarrow \left(i,j\leq n_1\ \text{and}\ p/i\ R_1\ q/j\right)\ \text{or}\ \left(i,j>n_1\ \text{and}\ p/(i-n_1)\ R_2\ q/(j-n_1)\right).$$
for any $p/i$, $q/j\in M$. It is clear that $\mathbb{P}=(M,R)$ is a pomset and is called as \textbf{direct sum} of $\mathbb{P}_1$ and $\mathbb{P}_2$ denoted by $\mathbb{P}=\mathbb{P}_1\oplus\mathbb{P}_2$.

Define a pomset relation $R$ on $M$ as:
$$p/i\ R\ q/j\Leftrightarrow \left(i,j\leq n_1\ \text{and}\ p/i\ R_1\ q/j\right)\ \text{or}\ \left(i,j>n_1\ \text{and}\ p/(i-n_1)\ R_2\ q/(j-n_1)\right)\ \text{or}\ \left(i\leq n_1<j\right)$$
for any $p/i$, $q/j\in M$. Then $\mathbb{P}=(M,R)$ is a pomset and is called as \textbf{ordinal sum} of $\mathbb{P}_1$ and $\mathbb{P}_2$ denoted by $\mathbb{P}=\mathbb{P}_1+\mathbb{P}_2$. See [\ref{POMSET}] for detailed discussion on sum of pomsets.

With these definitions, we have that $\widetilde{\mathbb{P}_1\oplus\mathbb{P}_2}=\widetilde{\mathbb{P}}_1\oplus \widetilde{\mathbb{P}}_2$ and $\widetilde{\mathbb{P}_1+\mathbb{P}_2}=\widetilde{\mathbb{P}}_2+\widetilde{\mathbb{P}}_1$. Define the \textbf{sum} of $\pi_1$ and $\pi_2$ denoted by $\pi=\pi_1\oplus\pi_2$ as $\pi:[s_1+s_2]\rightarrow\mathbb{N}$ such that
$$\pi(i)=\left\{
                             \begin{array}{ll}
                             \pi_1(i) &\text{if}\ i\leq n_1;\\[2mm]
                             \pi_2(i-n_1)&\text{if}\ i>n_1.
                             \end{array}
                           \right.$$

We now consider the code $\mathcal {C}$ equipped with $(\mathbb{P}_1\oplus\mathbb{P}_2,\pi)$ and $(\mathbb{P}_1+\mathbb{P}_2,\pi)$ block structures respectively.

With notations introduced above, we obtain the following result.

\begin{theorem}
(1) For a linear $(\mathbb{P}_1\oplus\mathbb{P}_2,\pi)$-code $\mathcal {C}=\mathcal {C}_1\oplus\mathcal {C}_2$, we have
$$W_{(\mathcal {C}^{\bot},\pi)}(x,y;\widetilde{\mathbb{P}_1\oplus\mathbb{P}_2})=W_{(\mathcal {C}_1^{\bot},\pi_1)}(x,y;\widetilde{\mathbb{P}}_1)W_{(\mathcal {C}_2^{\bot},\pi_2)}(x,y;\widetilde{\mathbb{P}}_2).$$
(2) For a linear $(\mathbb{P}_1+\mathbb{P}_2,\pi)$-code $\mathcal {C}=\mathcal {C}_1\oplus\mathcal {C}_2$, we have
$$W_{(\mathcal {C}^{\bot},\pi)}(x,y;\widetilde{\mathbb{P}_1+\mathbb{P}_2})=x^{s_1\lfloor\frac{m}{2}\rfloor} W_{(\mathcal {C}_2^{\bot},\pi_2)}(x,y;\widetilde{\mathbb{P}}_2)+\frac{m^{n_2}}{|\mathcal {C}_2|}y^{s_2\lfloor\frac{m}{2}\rfloor}\left(W_{(\mathcal {C}_1^{\bot},\pi_1)}(x,y;\widetilde{\mathbb{P}}_1)-x^{s_1\lfloor\frac{m}{2}\rfloor}\right).$$
\end{theorem}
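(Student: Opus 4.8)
The plan is to reduce both identities to the definition of the weight enumerator by analysing how the $(\mathbb{P}_1\oplus\mathbb{P}_2,\pi)$- and $(\mathbb{P}_1+\mathbb{P}_2,\pi)$-weights decompose along the block structure, together with the fact that $\mathcal{C}^{\bot}=\mathcal{C}_1^{\bot}\oplus\mathcal{C}_2^{\bot}$ when $\mathcal{C}=\mathcal{C}_1\oplus\mathcal{C}_2$ (since the inner product on $V$ splits as the sum of the inner products on the two blocks, so $\boldsymbol{w}=(\boldsymbol{w}^{(1)},\boldsymbol{w}^{(2)})$ is orthogonal to all of $\mathcal{C}$ iff $\boldsymbol{w}^{(1)}\perp\mathcal{C}_1$ and $\boldsymbol{w}^{(2)}\perp\mathcal{C}_2$). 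I would first record this orthogonality splitting as a preliminary observation, and also note the behaviour of the dual pomset already stated in the excerpt: $\widetilde{\mathbb{P}_1\oplus\mathbb{P}_2}=\widetilde{\mathbb{P}}_1\oplus\widetilde{\mathbb{P}}_2$ and $\widetilde{\mathbb{P}_1+\mathbb{P}_2}=\widetilde{\mathbb{P}}_2+\widetilde{\mathbb{P}}_1$.

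For part (1), the key combinatorial fact is that for the direct sum pomset, no element of the first block is comparable to any element of the second block, so the ideal generated by $supp_{(L,\pi)}(\boldsymbol{w})$ is the disjoint union of the ideal generated by $supp_{(L,\pi_1)}(\boldsymbol{w}^{(1)})$ in $\mathbb{P}_1$ and the ideal generated by $supp_{(L,\pi_2)}(\boldsymbol{w}^{(2)})$ in $\mathbb{P}_2$; hence $w_{(\widetilde{\mathbb{P}_1\oplus\mathbb{P}_2},\pi)}(\boldsymbol{w})=w_{(\widetilde{\mathbb{P}}_1,\pi_1)}(\boldsymbol{w}^{(1)})+w_{(\widetilde{\mathbb{P}}_2,\pi_2)}(\boldsymbol{w}^{(2)})$, and the two exponent shifts $s_1\lfloor\frac m2\rfloor$ and $s_2\lfloor\frac m2\rfloor$ add up to $s\lfloor\frac m2\rfloor$. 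Plugging $\boldsymbol{w}=(\boldsymbol{w}^{(1)},\boldsymbol{w}^{(2)})$ ranging over $\mathcal{C}_1^{\bot}\times\mathcal{C}_2^{\bot}$ into $W_{(\mathcal{C}^{\bot},\pi)}(x,y;\widetilde{\mathbb{P}_1\oplus\mathbb{P}_2})=\sum_{\boldsymbol{w}}x^{s\lfloor m/2\rfloor-w_{(\cdot)}(\boldsymbol{w})}y^{w_{(\cdot)}(\boldsymbol{w})}$ factors the sum as a product, giving exactly the claimed identity.

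For part (2), the analysis is similar but asymmetric, reflecting the ordinal structure. I would split $\boldsymbol{w}=(\boldsymbol{w}^{(1)},\boldsymbol{w}^{(2)})\in\mathcal{C}_1^{\bot}\oplus\mathcal{C}_2^{\bot}$ into two cases according to whether $\boldsymbol{w}^{(1)}=\boldsymbol{0}$. If $\boldsymbol{w}^{(1)}=\boldsymbol{0}$, then in $\widetilde{\mathbb{P}}_2+\widetilde{\mathbb{P}}_1$ (note the reversed order) the ideal generated by the support of $\boldsymbol{w}$ lives entirely in the $\mathbb{P}_2$-part, so the weight is just $w_{(\widetilde{\mathbb{P}}_2,\pi_2)}(\boldsymbol{w}^{(2)})$, contributing $x^{s_1\lfloor m/2\rfloor}W_{(\mathcal{C}_2^{\bot},\pi_2)}(x,y;\widetilde{\mathbb{P}}_2)$. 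If $\boldsymbol{w}^{(1)}\neq\boldsymbol{0}$, then since in $\widetilde{\mathbb{P}}_2+\widetilde{\mathbb{P}}_1$ every element of the (originally first) block lies above every element of the (originally second) block, the ideal generated by $supp_{(L,\pi)}(\boldsymbol{w})$ automatically contains the entire second block; thus its cardinality is $s_2\lfloor\frac m2\rfloor+w_{(\widetilde{\mathbb{P}}_1,\pi_1)}(\boldsymbol{w}^{(1)})$, independent of $\boldsymbol{w}^{(2)}$. Summing over $\boldsymbol{w}^{(2)}\in\mathcal{C}_2^{\bot}$ then yields a factor $|\mathcal{C}_2^{\bot}|=m^{n_2}/|\mathcal{C}_2|$, and summing over $\boldsymbol{0}\neq\boldsymbol{w}^{(1)}\in\mathcal{C}_1^{\bot}$ gives $y^{s_2\lfloor m/2\rfloor}\bigl(W_{(\mathcal{C}_1^{\bot},\pi_1)}(x,y;\widetilde{\mathbb{P}}_1)-x^{s_1\lfloor m/2\rfloor}\bigr)$, where the subtracted term removes the $\boldsymbol{w}^{(1)}=\boldsymbol{0}$ contribution. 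Adding the two cases gives the stated formula.

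The routine parts are the exponent bookkeeping; the one step that needs genuine care is the claim that in the ordinal sum $\widetilde{\mathbb{P}}_2+\widetilde{\mathbb{P}}_1$ a nonzero first block forces the whole second block into the generated ideal. I would verify this directly from the definition of the ordinal-sum relation and the definition of $\langle\,\cdot\,\rangle$: if $s_i/i\in supp_{(L,\pi)}(\boldsymbol{w})$ for some $i$ in the first block, then since $q/j\ \widetilde R\ s_i/i$ for every $j$ in the second block with the full multiplicity $q=\lfloor\frac m2\rfloor$, the whole second-block multiset $\{\lfloor\frac m2\rfloor/j\}$ is absorbed, contributing exactly $s_2\lfloor\frac m2\rfloor$ to the cardinality; I expect this to be the main (though still short) obstacle, and everything else follows by substitution into the definition of $W$.
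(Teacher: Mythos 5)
Your proof is correct, but it takes a genuinely different route from the paper. The paper never uses the splitting of the dual code; instead it runs the same Fourier-transform machinery as in its main theorem: it defines $f(\boldsymbol{u})=x^{(s_1+s_2)\lfloor\frac{m}{2}\rfloor-w_{(\widetilde{\mathbb{P}},\pi)}(\boldsymbol{u})}y^{w_{(\widetilde{\mathbb{P}},\pi)}(\boldsymbol{u})}$, applies Lemma \ref{dual} to write $W_{(\mathcal{C}^{\bot},\pi)}=\frac{1}{|\mathcal{C}|}\sum_{\boldsymbol{u}\in\mathcal{C}}\hat{f}(\boldsymbol{u})$, and then evaluates the character sums: for the direct sum the double sum over $(\boldsymbol{v_1},\boldsymbol{v_2})$ factors into a product of two one-block transforms, and for the ordinal sum it splits over $\boldsymbol{v_1}=\boldsymbol{0}$ versus $\boldsymbol{v_1}\neq\boldsymbol{0}$, with Lemma \ref{ring} producing the factor $m^{n_2}$ from $\sum_{\boldsymbol{v_2}}\chi(\boldsymbol{u_2}\cdot\boldsymbol{v_2})$. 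You instead first observe $\mathcal{C}^{\bot}=\mathcal{C}_1^{\bot}\oplus\mathcal{C}_2^{\bot}$ and compute the dual weight enumerator directly from the weight decompositions $w_{(\widetilde{\mathbb{P}_1\oplus\mathbb{P}_2},\pi)}(\boldsymbol{w})=w_{(\widetilde{\mathbb{P}}_1,\pi_1)}(\boldsymbol{w}^{(1)})+w_{(\widetilde{\mathbb{P}}_2,\pi_2)}(\boldsymbol{w}^{(2)})$ and, in the ordinal case, $w_{(\widetilde{\mathbb{P}},\pi)}(\boldsymbol{w})=s_2\lfloor\frac{m}{2}\rfloor+w_{(\widetilde{\mathbb{P}}_1,\pi_1)}(\boldsymbol{w}^{(1)})$ when $\boldsymbol{w}^{(1)}\neq\boldsymbol{0}$; both decompositions, and your absorption argument for the full second block, are valid for the pomset block ideals as defined. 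Your route is more elementary and arguably more transparent about where the two terms in part (2) come from; what it does not give for free is the normalization $|\mathcal{C}_2^{\bot}|=m^{n_2}/|\mathcal{C}_2|$, which you invoke without justification and which in the paper's approach falls out automatically of Lemma \ref{ring}. You should state that this cardinality identity holds for submodules of $\mathbb{Z}_m^{n_2}$ (e.g.\ by applying Lemma \ref{dual} with $f\equiv 1$), after which your argument is complete; the paper's character-sum route, by contrast, generalizes more readily to situations where $\mathcal{C}$ is not itself a direct sum, which is why the paper keeps that machinery even here.
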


\begin{proof}
Set $\mathbb{P}=\mathbb{P}_1\oplus\mathbb{P}_2$. Define a function $f:\mathbb{Z}_{m}^n\rightarrow\mathbb{C}[x,y]$ by
$$f(\boldsymbol{u})= x^{(s_1+s_2)\lfloor\frac{m}{2}\rfloor-w_{(\widetilde{\mathbb{P}},\pi)}(\boldsymbol{u})} y^{w_{(\widetilde{\mathbb{P}},\pi)}(\boldsymbol{u})}.$$
Then the Fourier transform $\hat{f}$ of $f$ is
$$\hat{f}(\boldsymbol{u})= \sum\limits_{(\boldsymbol{v_1,v_2})\in\mathbb{Z}_m^n}\chi(\boldsymbol{u}\cdot\boldsymbol{v}) x^{(s_1+s_2)\lfloor\frac{m}{2}\rfloor-w_{(\widetilde{\mathbb{P}},\pi)}(\boldsymbol{v})} y^{w_{(\widetilde{\mathbb{P}},\pi)}(\boldsymbol{v})}.$$
The $(\widetilde{\mathbb{P}},\pi)$-weight enumerator of $\mathcal {C}^{\bot}$ is
\begin{eqnarray*}
&&W_{(\mathcal {C}^{\bot},\pi)}(x,y;\widetilde{\mathbb{P}})=\sum\limits_{\boldsymbol{u}\in\mathcal {C}^{\bot}}f(\boldsymbol{u})=\frac{1}{|\mathcal {C}|}\sum\limits_{\boldsymbol{u}\in\mathcal {C}}\hat{f}(\boldsymbol{u})\\
&=&\frac{1}{|\mathcal {C}|}\sum\limits_{\boldsymbol{u}\in\mathcal {C}}\sum\limits_{(\boldsymbol{v_1,v_2})\in\mathbb{Z}_m^n} \chi(\boldsymbol{u}\cdot\boldsymbol{v}) x^{(s_1+s_2)\lfloor\frac{m}{2}\rfloor-w_{(\widetilde{\mathbb{P}},\pi)}(\boldsymbol{v})} y^{w_{(\widetilde{\mathbb{P}},\pi)}(\boldsymbol{v})}\\
&=&\frac{1}{|\mathcal {C}|}\sum\limits_{\boldsymbol{u}\in\mathcal {C}}\sum\limits_{\boldsymbol{v}\in\mathbb{Z}_m^n}\chi(\boldsymbol{u_1}\cdot\boldsymbol{v_1}) \chi(\boldsymbol{u_2}\cdot\boldsymbol{v_2}) x^{(s_1+s_2)\lfloor\frac{m}{2}\rfloor-w_{(\widetilde{\mathbb{P}}_1,\pi_1)}(\boldsymbol{v_1})- w_{(\widetilde{\mathbb{P}}_2,\pi_2)}(\boldsymbol{v_2})} y^{w_{(\widetilde{\mathbb{P}}_1,\pi_1)}(\boldsymbol{v_1})+ w_{(\widetilde{\mathbb{P}}_2,\pi_2)}(\boldsymbol{v_2})}\\
&=&\prod\limits_{i=1}^2\frac{1}{|\mathcal {C}_i|}\sum\limits_{\boldsymbol{u_i}\in\mathcal {C}_i}\sum\limits_{\boldsymbol{v_i}\in\mathbb{Z}_m^{n_i}} \chi(\boldsymbol{u_i}\cdot\boldsymbol{v_i}) x^{s_i\lfloor\frac{m}{2}\rfloor-w_{(\widetilde{\mathbb{P}}_i,\pi_i)}(\boldsymbol{v_i})} y^{w_{(\widetilde{\mathbb{P}}_i,\pi_i)}(\boldsymbol{v_i})}\\
&=&W_{(\mathcal {C}_1^{\bot},\pi_1)}(x,y;\widetilde{\mathbb{P}}_1)W_{(\mathcal {C}_2^{\bot},\pi_2)}(x,y;\widetilde{\mathbb{P}}_2).
\end{eqnarray*}
If $\mathbb{P}=\mathbb{P}_1+\mathbb{P}_2$, then
\begin{eqnarray}\label{ordinal}
\sum\limits_{\boldsymbol{u}\in\mathcal {C}}\hat{f}(\boldsymbol{u})&=&\sum\limits_{\boldsymbol{u}\in\mathcal {C}}\sum\limits_{(\boldsymbol{v_1,v_2})\in\mathbb{Z}_m^n}\chi(\boldsymbol{u}\cdot\boldsymbol{v}) x^{(s_1+s_2)\lfloor\frac{m}{2}\rfloor-w_{(\widetilde{\mathbb{P}},\pi)}(\boldsymbol{v})} y^{w_{(\widetilde{\mathbb{P}},\pi)}(\boldsymbol{v})}\nonumber\\
&=&\sum\limits_{\boldsymbol{u}\in\mathcal {C}}\left[\sum\limits_{(\textbf{0},\boldsymbol{v_2})\in\mathbb{Z}_m^n} \chi(\boldsymbol{u_2}\cdot\boldsymbol{v_2}) x^{(s_1+s_2)\lfloor\frac{m}{2}\rfloor-w_{(\widetilde{\mathbb{P}}_2,\pi_2)}(\boldsymbol{v_2})} y^{w_{(\widetilde{\mathbb{P}}_2,\pi_2)}(\boldsymbol{v_2})}+\right.\nonumber\\
&&\left.\sum\limits_{(\boldsymbol{v_1,v_2})\in\mathbb{Z}_m^n\atop \boldsymbol{v_1}\neq \boldsymbol{0}}\chi(\boldsymbol{u_1}\cdot\boldsymbol{v_1}) \chi(\boldsymbol{u_2}\cdot\boldsymbol{v_2}) x^{(s_1+s_2)\lfloor\frac{m}{2}\rfloor- w_{(\widetilde{\mathbb{P}}_1,\pi_1)}(\boldsymbol{v_1})-s_2\lfloor\frac{m}{2}\rfloor} y^{w_{(\widetilde{\mathbb{P}}_1,\pi_1)}(\boldsymbol{v_1})+s_2\lfloor\frac{m}{2}\rfloor}\right] \nonumber\\
&=&\sum\limits_{\boldsymbol{u}\in\mathcal {C}}x^{s_1\lfloor\frac{m}{2}\rfloor}\sum\limits_{\boldsymbol{v_2}\in\mathbb{Z}_m^{n_2}} \chi(\boldsymbol{u_2}\cdot\boldsymbol{v_2})
x^{s_2\lfloor\frac{m}{2}\rfloor-w_{(\widetilde{\mathbb{P}}_2,\pi_2)}(\boldsymbol{v_2})} y^{w_{(\widetilde{\mathbb{P}}_2,\pi_2)}(\boldsymbol{v_2})}+\nonumber\\
&&\sum\limits_{\boldsymbol{u}\in\mathcal {C}}y^{s_2\lfloor\frac{m}{2}\rfloor}\sum\limits_{(\boldsymbol{v_1,v_2})\in\mathbb{Z}_m^n\atop \boldsymbol{v_1}\neq\boldsymbol{0}}\chi(\boldsymbol{u_1}\cdot\boldsymbol{v_1}) \chi(\boldsymbol{u_2}\cdot\boldsymbol{v_2}) x^{s_1\lfloor\frac{m}{2}\rfloor-w_{(\widetilde{\mathbb{P}}_1,\pi_1)}(\boldsymbol{v_1})} y^{w_{(\widetilde{\mathbb{P}}_1,\pi_1)}(\boldsymbol{v_1})}.
\end{eqnarray}
The first summation in (\ref{ordinal}) is
\begin{eqnarray*}
&&x^{s_1\lfloor\frac{m}{2}\rfloor} \sum\limits_{(\boldsymbol{u_1,u_2})\in\mathcal {C}}\sum\limits_{\boldsymbol{v_2}\in\mathbb{Z}_m^{n_2}} \chi(\boldsymbol{u_2}\cdot\boldsymbol{v_2})
x^{s_2\lfloor\frac{m}{2}\rfloor-w_{(\widetilde{\mathbb{P}}_2,\pi_2)}(\boldsymbol{v_2})} y^{w_{(\widetilde{\mathbb{P}}_2,\pi_2)}(\boldsymbol{v_2})}\\
&=&|\mathcal {C}_1|x^{s_1\lfloor\frac{m}{2}\rfloor}\sum\limits_{\boldsymbol{u_2}\in\mathcal {C}_2}\sum\limits_{\boldsymbol{v_2}\in\mathbb{Z}_m^{n_2}} \chi(\boldsymbol{u_2}\cdot\boldsymbol{v_2})
x^{s_2\lfloor\frac{m}{2}\rfloor-w_{(\widetilde{\mathbb{P}}_2,\pi_2)}(\boldsymbol{v_2})} y^{w_{(\widetilde{\mathbb{P}}_2,\pi_2)}(\boldsymbol{v_2})}\\
&=&|\mathcal {C}|x^{s_1\lfloor\frac{m}{2}\rfloor}\sum\limits_{\boldsymbol{v_2}\in\mathcal {C}_2^{\bot}} x^{s_2\lfloor\frac{m}{2}\rfloor-w_{(\widetilde{\mathbb{P}}_2,\pi_2)}(\boldsymbol{v_2})} y^{w_{(\widetilde{\mathbb{P}}_2,\pi_2)}(\boldsymbol{v_2})}\\
&=&|\mathcal {C}|x^{s_1\lfloor\frac{m}{2}\rfloor}W_{(\mathcal {C}_2^{\bot},\pi_2)}(x,y;\widetilde{\mathbb{P}}_2).
\end{eqnarray*}
The second summation in (\ref{ordinal}) is
\begin{eqnarray*}
&&y^{s_2\lfloor\frac{m}{2}\rfloor}\sum\limits_{\boldsymbol{u}\in\mathcal {C}}\sum\limits_{\textbf{0}\neq \boldsymbol{v_1}\in\mathbb{Z}_m^{n_1}}\chi(\boldsymbol{u_1}\cdot\boldsymbol{v_1}) x^{s_1\lfloor\frac{m}{2}\rfloor-w_{(\widetilde{\mathbb{P}}_1,\pi_1)}(\boldsymbol{v_1})} y^{w_{(\widetilde{\mathbb{P}}_1,\pi_1)}(\boldsymbol{v_1})} \sum\limits_{\boldsymbol{v_2}\in\mathbb{Z}_m^{n_2}} \chi(\boldsymbol{u_2}\cdot\boldsymbol{v_2})\\
&=&m^{n_2} y^{s_2\lfloor\frac{m}{2}\rfloor}\sum\limits_{(\boldsymbol{u_1},\textbf{0})\in\mathcal {C}}\sum\limits_{\textbf{0}\neq \boldsymbol{v_1}\in\mathbb{Z}_m^{n_1}}\chi(\boldsymbol{u_1}\cdot\boldsymbol{v_1}) x^{s_1\lfloor\frac{m}{2}\rfloor-w_{(\widetilde{\mathbb{P}}_1,\pi_1)}(\boldsymbol{v_1})} y^{w_{(\widetilde{\mathbb{P}}_1,\pi_1)}(\boldsymbol{v_1})}\\
&=&m^{n_2}y^{s_2\lfloor\frac{m}{2}\rfloor}\sum\limits_{\boldsymbol{u_1}\in\mathcal {C}_1}\left(\sum\limits_{\boldsymbol{v_1}\in\mathbb{Z}_m^{n_1}} \chi(\boldsymbol{u_1}\cdot\boldsymbol{v_1}) x^{s_1\lfloor\frac{m}{2}\rfloor-w_{(\widetilde{\mathbb{P}}_1,\pi_1)}(\boldsymbol{v_1})} y^{w_{(\widetilde{\mathbb{P}}_1,\pi_1)}(\boldsymbol{v_1})}- x^{s_1\lfloor\frac{m}{2}\rfloor}\right)\\
&=&m^{n_2}y^{s_2\lfloor\frac{m}{2}\rfloor}\left(|\mathcal {C}_1|\sum\limits_{\boldsymbol{v_1}\in\mathcal {C}_1^{\bot}}x^{s_1\lfloor\frac{m}{2}\rfloor- w_{(\widetilde{\mathbb{P}}_1,\pi_1)}(\boldsymbol{v_1})} y^{w_{(\widetilde{\mathbb{P}}_1,\pi_1)}(\boldsymbol{v_1})}-|\mathcal {C}_1|x^{s_1\lfloor\frac{m}{2}\rfloor}\right)\\
&=&|\mathcal {C}_1|m^{n_2}y^{s_2\lfloor\frac{m}{2}\rfloor}\left(W_{(\mathcal {C}_1^{\bot},\pi_1)}(x,y;\widetilde{\mathbb{P}}_1)-x^{s_1\lfloor\frac{m}{2}\rfloor}\right).
\end{eqnarray*}
Hence
$$W_{(\mathcal {C}^{\bot},\pi)}(x,y;\widetilde{\mathbb{P}})=\frac{1}{|\mathcal {C}|}\sum\limits_{u\in\mathcal {C}}\hat{f}(u)=x^{s_1\lfloor\frac{m}{2}\rfloor} W_{(\mathcal {C}_2^{\bot},\pi_2)}(x,y;\widetilde{\mathbb{P}}_2)+\frac{m^{n_2}}{|\mathcal {C}_2|}y^{s_2\lfloor\frac{m}{2}\rfloor}\left(W_{(\mathcal {C}_1^{\bot},\pi_1)}(x,y;\widetilde{\mathbb{P}}_1)-x^{s_1\lfloor\frac{m}{2}\rfloor}\right).$$
\end{proof}

Let $\mathcal {C}_i\subseteq\mathbb{Z}_m^{n_i}$ be a linear $(\mathbb{P}_i,\pi_i)$-code for $i\in\{1,2,\ldots,\lambda\}$. Let
$\mathcal {C}=\bigoplus\limits_{i=1}^{\lambda}\mathcal {C}_i$ and $\pi=\bigoplus\limits_{i=1}^{\lambda}\pi_i$. By the induction on $\lambda$, we obtain the following result.
\begin{theorem}\label{general}
\begin{enumerate}[(1)]
\item If $\mathbb{P}=\bigoplus\limits_{i=1}^{\lambda}\mathbb{P}_i$, then $\mathcal {C}$ is a linear $(\mathbb{P},\pi)$-code and
    $$W_{(\mathcal {C}^{\bot},\pi)}(x,y;\widetilde{\mathbb{P}})= \prod\limits_{i=1}^{\lambda}
    W_{(\mathcal{C}_i^{\bot},\pi_i)}(x,y;\widetilde{\mathbb{P}}_i).$$
\item Set $s=s_1+\cdots+s_{\lambda}$. If $\mathbb{P}=\mathbb{P}_1+\mathbb{P}_2+\cdots+\mathbb{P}_{\lambda}$, then $\mathcal {C}$ is a linear $(\mathbb{P},\pi)$-code and
    \begin{eqnarray*}
    W_{(\mathcal {C}^{\bot},\pi)}(x,y;\widetilde{\mathbb{P}})&=& x^{s_1+\cdots+s_{\lambda-1}\lfloor\frac{m}{2}\rfloor}W_{(\mathcal {C}_{\lambda}^{\bot},\pi_{\lambda})}(x,y;\widetilde{\mathbb{P}}_{\lambda})+\\[2mm] &&\prod\limits_{i=1}^{\lambda} \frac{m^{n_{i+1}+\cdots+n_{\lambda}}y^{s_{i+1}+\cdots+s_{\lambda}}}{|\mathcal {C}_{i+1}|\cdots|\mathcal {C}_{\lambda}|x^{s_i+\cdots+s_{\lambda}-s}}\left(W_{(\mathcal {C}_i^{\bot},\pi_i)}(x,y;\widetilde{\mathbb{P}}_i)-x^{s_i\lfloor\frac{m}{2}\rfloor}\right).
    \end{eqnarray*}
\end{enumerate}
\end{theorem}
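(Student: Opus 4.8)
The plan is to prove both parts by induction on $\lambda$, taking $\lambda=2$ (the preceding theorem) as the base case. Two structural facts drive the induction. First, both pomset operations are associative: directly from the defining relations, $\bigoplus_{i=1}^{\lambda}\mathbb{P}_i=\bigl(\bigoplus_{i=1}^{\lambda-1}\mathbb{P}_i\bigr)\oplus\mathbb{P}_{\lambda}$ and $\mathbb{P}_1+\cdots+\mathbb{P}_{\lambda}=\bigl(\mathbb{P}_1+\cdots+\mathbb{P}_{\lambda-1}\bigr)+\mathbb{P}_{\lambda}$, and the corresponding codes and labelings decompose the same way, $\mathcal{C}=\mathcal{C}'\oplus\mathcal{C}_{\lambda}$ and $\pi=\pi'\oplus\pi_{\lambda}$ with $\mathcal{C}'=\bigoplus_{i=1}^{\lambda-1}\mathcal{C}_i$, $\pi'=\bigoplus_{i=1}^{\lambda-1}\pi_i$. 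Second, dualization is compatible with both operations: iterating the two-term identities $\widetilde{\mathbb{P}'\oplus\mathbb{P}_{\lambda}}=\widetilde{\mathbb{P}'}\oplus\widetilde{\mathbb{P}}_{\lambda}$ and $\widetilde{\mathbb{P}'+\mathbb{P}_{\lambda}}=\widetilde{\mathbb{P}}_{\lambda}+\widetilde{\mathbb{P}'}$ recorded before the statement yields their $\lambda$-fold versions. I would also use that the Euclidean dual splits over an orthogonal direct sum, $(\mathcal{C}'\oplus\mathcal{C}_{\lambda})^{\bot}=\mathcal{C}'^{\bot}\oplus\mathcal{C}_{\lambda}^{\bot}$, and that a direct sum of submodules is a submodule (so $\mathcal{C}$ is indeed a linear $(\mathbb{P},\pi)$-code in both cases).

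Part (1) is then short: apply part (1) of the preceding theorem to the splitting $\mathbb{P}=\mathbb{P}'\oplus\mathbb{P}_{\lambda}$, $\mathcal{C}=\mathcal{C}'\oplus\mathcal{C}_{\lambda}$ to get $W_{(\mathcal{C}^{\bot},\pi)}(x,y;\widetilde{\mathbb{P}})=W_{(\mathcal{C}'^{\bot},\pi')}(x,y;\widetilde{\mathbb{P}'})\,W_{(\mathcal{C}_{\lambda}^{\bot},\pi_{\lambda})}(x,y;\widetilde{\mathbb{P}}_{\lambda})$, and then invoke the induction hypothesis on the first factor to rewrite it as $\prod_{i=1}^{\lambda-1}W_{(\mathcal{C}_i^{\bot},\pi_i)}(x,y;\widetilde{\mathbb{P}}_i)$.

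The substance is in part (2). Write $\mathbb{P}=\mathbb{P}'+\mathbb{P}_{\lambda}$ with $\mathbb{P}'=\mathbb{P}_1+\cdots+\mathbb{P}_{\lambda-1}$ and $\mathcal{C}=\mathcal{C}'\oplus\mathcal{C}_{\lambda}$, where $\mathcal{C}'$ is a linear $(\mathbb{P}',\pi')$-code by the induction hypothesis. Part (2) of the preceding theorem applied to $(\mathbb{P}',\mathbb{P}_{\lambda})$ gives, with $s'=s_1+\cdots+s_{\lambda-1}$ and $|\mathcal{C}'|=|\mathcal{C}_1|\cdots|\mathcal{C}_{\lambda-1}|$,
$$W_{(\mathcal{C}^{\bot},\pi)}(x,y;\widetilde{\mathbb{P}})=x^{s'\lfloor\frac{m}{2}\rfloor}W_{(\mathcal{C}_{\lambda}^{\bot},\pi_{\lambda})}(x,y;\widetilde{\mathbb{P}}_{\lambda})+\frac{m^{n_{\lambda}}}{|\mathcal{C}_{\lambda}|}y^{s_{\lambda}\lfloor\frac{m}{2}\rfloor}\Bigl(W_{(\mathcal{C}'^{\bot},\pi')}(x,y;\widetilde{\mathbb{P}'})-x^{s'\lfloor\frac{m}{2}\rfloor}\Bigr).$$
Substituting the induction hypothesis for $W_{(\mathcal{C}'^{\bot},\pi')}(x,y;\widetilde{\mathbb{P}'})$: its leading term $x^{(s_1+\cdots+s_{\lambda-2})\lfloor\frac{m}{2}\rfloor}W_{(\mathcal{C}_{\lambda-1}^{\bot},\pi_{\lambda-1})}$ combines with the $-x^{s'\lfloor\frac{m}{2}\rfloor}$ into $x^{(s_1+\cdots+s_{\lambda-2})\lfloor\frac{m}{2}\rfloor}\bigl(W_{(\mathcal{C}_{\lambda-1}^{\bot},\pi_{\lambda-1})}-x^{s_{\lambda-1}\lfloor\frac{m}{2}\rfloor}\bigr)$, which is exactly the $i=\lambda-1$ summand of the target (empty product $=1$); each remaining $i\le\lambda-2$ summand is multiplied by $\frac{m^{n_{\lambda}}}{|\mathcal{C}_{\lambda}|}y^{s_{\lambda}\lfloor\frac{m}{2}\rfloor}$, which promotes $m^{n_{i+1}+\cdots+n_{\lambda-1}}$, $y^{(s_{i+1}+\cdots+s_{\lambda-1})\lfloor\frac{m}{2}\rfloor}$ and $|\mathcal{C}_{i+1}|\cdots|\mathcal{C}_{\lambda-1}|$ to their $\lambda$-indexed forms while leaving $x^{(s_1+\cdots+s_{i-1})\lfloor\frac{m}{2}\rfloor}$ untouched. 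Together with the first term $x^{s'\lfloor\frac{m}{2}\rfloor}W_{(\mathcal{C}_{\lambda}^{\bot},\pi_{\lambda})}$ this reassembles the claimed formula at level $\lambda$.

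I expect the only delicate point to be this last bookkeeping: verifying that the nested two-term recursion flattens into the single stated sum, that the accumulated prefactors $x^{(s_1+\cdots+s_{i-1})\lfloor\frac{m}{2}\rfloor}$, $m^{n_{i+1}+\cdots+n_{\lambda}}$, $y^{(s_{i+1}+\cdots+s_{\lambda})\lfloor\frac{m}{2}\rfloor}$ and $1/(|\mathcal{C}_{i+1}|\cdots|\mathcal{C}_{\lambda}|)$ come out as asserted, and that the successive cancellations of the $-x^{(\cdots)\lfloor\frac{m}{2}\rfloor}$ terms against the leading terms leave no spurious contribution. This is routine but index-heavy, and I would carry it out by writing the right-hand side at level $\lambda-1$, multiplying it through by $\frac{m^{n_\lambda}}{|\mathcal{C}_\lambda|}y^{s_\lambda\lfloor\frac{m}{2}\rfloor}$, and matching term by term against the right-hand side at level $\lambda$.
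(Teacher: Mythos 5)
Your proposal matches the paper's own argument: the paper obtains this theorem precisely by induction on $\lambda$, using the two-pomset theorem (direct and ordinal sum) as base case and inductive step, together with the associativity of $\oplus$ and $+$ and the compatibility of dualization, exactly as you describe. The only point worth flagging is notational: in part (2) the displayed $\prod\limits_{i=1}^{\lambda}$ and the exponents of $x$ and $y$ must be read as the sum over $i$ (with the factor $\lfloor\frac{m}{2}\rfloor$ in the exponents) that your flattening of the two-term recursion produces, which is the intended content of the statement.
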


\begin{remark}
When the structure of blocks are trivial (that is, $\pi(i)=1$ for $i\in[s]$), then we have the result for pomset metric stated in [\ref{POMSETMAC}].
\end{remark}

\section{Conclusion}

\quad\;In this paper, we consider the relation between the weight enumerators of a code and its dual when the pomset is a chain pomset. Note that when the metric is taken to be poset metric over $\mathbb{F}_q^n$ where $\mathbb{F}_q$ is a finite field, it is known that a poset $P$ admits the MacWilliams identity if and only if $P$ is a hierarchical poset (see [\ref{POSETMAC}]) . Further, when the metric is considered to be a poset block metric over $\mathbb{F}_q^n$, it is known that a poset-block space admits a MacWilliams type identity if and only if the poset is hierarchical, and at any level of the poset, all the blocks have the same dimension (see [\ref{posetmac}]). Nevertheless, being a chain pomset can not be a necessary and sufficient condition for a pomset to admit the MacWilliams identity.

Based on these observations, a natural question is brought up: is hierarchical a necessary condition for a pomset $\mathbb{P}=(M,R)$ to admit the MacWilliams identity (here hierarchical pomset means that $M$ can be partitioned into $l$ nonempty anti-chains in $\mathbb{P}$, say, $(A_1,\ldots,A_l)$, such that for any $i,j\in[s]$ with $i<j$, it holds that $p/a\ R\ q/b$ for all $a\in A_i$ and $b\in A_j$) ?

\end{document}